\documentclass[12pt]{indmath}

\usepackage{neubauer}
\usepackage{graphicx}
\usepackage{amsmath}
\usepackage{physics}
\usepackage{xcolor}
\usepackage{caption}
\usepackage{soul,url}
\usepackage{subcaption}
\begin{document}

\title{A self-adjoint Fourier-type model for the iQuad wavefront sensor}
\author{Victoria Laidlaw$^1$, Oliver Fauvarque$^2$, Alfred Miksch$^3$, Benoit Neichel$^4$ and Ronny Ramlau$^{1,3}$\\\footnotesize{\rm $^1$Industrial
 Mathematics Institute, Johannes Kepler University Linz, A-4040 Linz,
 Austria.
}
\\\footnotesize{\rm $^2$IFREMER, Laboratoire Detection, Capteurs et Mesures (LDCM), Centre Bretagne, Plouzane, France.}
\\\footnotesize{\rm $^3$Johann Radon Institute for Computational and Applied Mathematics, Linz, Austria.}
\\\footnotesize{\rm $^4$Laboratoire d'Astrophysique de Marseille, France.}
\\ Corresponding author: victoria.laidlaw@indmath.uni-linz.ac.at}
\maketitle


\begin{abstract}

Advanced adaptive optics (AO) systems can use Fourier-type wavefront sensing to correct optical distortions encountered in ground-based telescopes, AO-assisted retinal imaging, and free-space optical communications (FSOC). Recently, a novel Fourier-type wavefront sensor (WFS) known as the iQuad WFS has been introduced. Its design features a focal plane tessellation with a four-quadrant phase mask (FQPM) that incorporates a  $\pm \pi/2$ phase shift between adjacent quadrants.

In this work, we establish a comprehensive mathematical framework for the iQuad WFS, including its forward models and linearizations based on the Fr\'echet derivative. We reveal a connection between the iQuad WFS and the 2d finite Hilbert transform and demonstrate that the linear iQuad WFS operator is self-adjoint - a unique property among Fourier-type WFSs. 
Additionally, we introduce the double iQuad WFS, a two-path configuration that combines two rotated iQuad WFSs. This design addresses the limitations of the single iQuad WFS by suppressing poorly-seen phase components. Moreover, the double setup simplifies the mathematical modeling. We also highlight iQuad similarities to the widely used pyramid wavefront sensor (PWFS). Finally, we extend the concept of modulation to the iQuad WFS, further enhancing its versatility.

The theoretical analysis presented here lays the groundwork for the development of fast and robust model-based wavefront reconstruction algorithms for the iQuad WFS, paving the way for future applications in AO instruments.

\end{abstract}

\keywords{wavefront sensing, adaptive optics, iQuad wavefront sensor, wavefront reconstruction, mathematical analysis}

\section{Introduction}\label{sec:intro}

Adaptive optics (AO) systems are utilized in various instruments in ophthalmic imaging, microscopy, free-space optical communications (FSOC) and in ground-based astronomical observatories, including the next generation of extremely large telescopes (ELTs)~\cite{Roddier, Tyson91}. In AO instruments, deformable mirrors (DMs) are used to improve the image quality by mechanically correcting dynamic wavefront aberrations in real-time. The basic idea is to reflect the distorted wavefronts on a mirror that is shaped appropriately so that the corrected wavefronts allow for high image quality when observed by the science camera. Optimal positioning of mirror actuators implies the knowledge of the incoming wavefronts. Although wavefronts cannot be measured directly, they can be reconstructed with the use of a wavefront sensor (WFS), which measures the time-averaged characteristics of the captured light that is related to the incoming phase. Reconstruction of the incoming wavefront from given WFS measurements is an \textit{inverse problem}.

Different WFSs can be employed and one of the most popular choices is the Shack-Hartmann (SH) sensor~\cite{Shack1971}. However, Fourier-type WFSs such as the pyramid wavefront sensor (PWFS) have many advantages compared to the SH WFS, e.g., regarding noise propagation or sampling flexibility \cite{Fauv16}, and thus have become the sensors of choice for many AO instruments. They utilize optical Fourier filtering, achieved through a specific optical element (e.g., a multi-facet glass prism) positioned in the focal plane, see Fig.~\ref{fig_fwfs}, left. After being Fourier filtered, the light is refocused into a subsequent pupil plane, where the modified wavefront produces intensity patterns which are captured by a camera. By analyzing the patterns, the wavefront phase can be reconstructed using mathematical algorithms from the field of inverse problems.

The focus of this article is on a new Fourier-type WFS, called the iQuad WFS that is derived from the four-quadrant phase mask (FQPM) coronagraph~\cite{Rouan}.

Section~\ref{4qpm} introduces the concept of WFSs derived from coronagraphy, with a focus on the FQPM coronagraph. In Section~\ref{mathmodels}, the mathematical modeling and analysis of the iQuad WFS are presented. Section~\ref{double_iQuad} explores the double iQuad WFS, a two-path variation of the sensor. Connections and similarities with the well-known PWFS are examined in Section~\ref{sim_pwfs}. Finally, Section~\ref{mod_iquad} introduces the concept of modulation for iQuad wavefront sensing.


\section{Four-quadrant Fourier-type WFS class}\label{4qpm}

WFSs are used to obtain phase information of an incoming wavefront from intensity measurements. To this end, Fourier-type WFSs as well as coronagraphs utilize optical Fourier filtering. This technique manipulates light in the spatial frequency domain by using a mask placed in the focal plane. In coronagraphy, these masks are designed to reject light, whereas in wavefront sensing they are engineered to convert phase fluctuations into intensity fluctuations.

Certain coronagraph masks can be adapted for wavefront sensing with minor modifications. For example, the Roddier \& Roddier coronagraph~\cite{RetR} and the Zernike WFS~\cite{Ndia13,Zer34} both use a mask featuring a small circular phase-shifting spot in its center. For an optical system operating at a wavelength $\lambda$, the depth of this spot is $\lambda/2$ in the coronagraph configuration, while it is $\lambda/4$ for the Zernike WFS. Physically, a depth of $\lambda/2$ introduces a phase shift of $\pi$ between the fields inside and outside the central spot, placing them in opposition of phase. This phase opposition results in the destructive interference required for coronagraphy. In contrast, for the Zernike WFS, the phase shift is $\pi/2$, which means that the fields are in phase quadrature. Another example of a coronagraph idea being turned into a WFS is the newly introduced bivortex WFS~\cite{Chambouleyron2024CoronagraphWFS}.

In this work, we propose the extension of this concept to another coronagraph, the FQPM coronagraph~\cite{Rouan}. In the wavefront sensing context, the general idea follows those of Fourier-type WFSs with a mask having a Cartesian structure in the focal plane~\cite{FaHuShaRaLAM19_AO4ELTproc}. These masks are divided into $4$ quadrants around the origin and have an arbitrary shift $\delta \in [-\lambda/2,\lambda/2]$ in between. The corresponding optical transfer function (OTF) is given by
\begin{equation}\label{otf_fqpm}
OTF_{\text{FQPM}}(x,y) := \begin{cases}
            \exp{\frac{2i\pi}{\lambda}\delta}\, , &xy <0\, ,\\
            1\, , &\text{else}\, ,
        \end{cases}
\end{equation}
and visualized in Fig.~\ref{fig_fwfs}, right.
\begin{figure}
  \centering
  \includegraphics[width=0.9\textwidth]{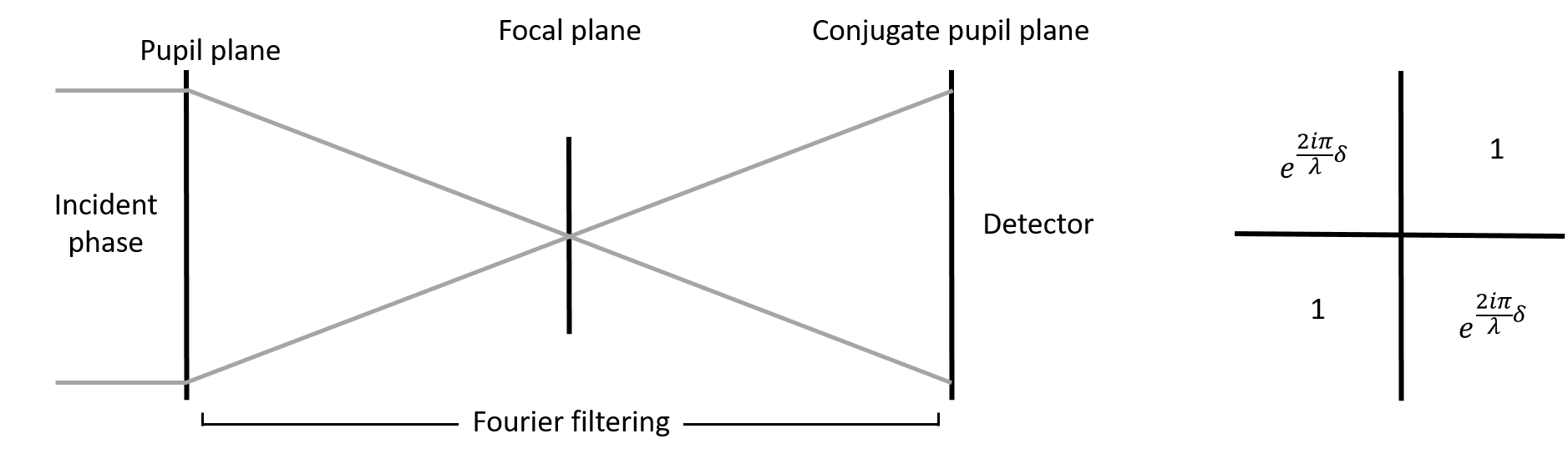}
  \caption{Concept of Fourier-type WFSs~\cite{HuNeuSha_2023} (left) and transparency function of the FQPM with an arbitrary differential piston $\delta$ (right).}
  \label{fig_fwfs}
\end{figure}

For the underlying mathematical model, a phase entering the system is represented by a real-valued 2d function $\phi(x,y)$. For simplification, a constant amplitude across the full pupil $\Omega \subset \R^2$ is assumed. Thus, for wavefront sensing we consider the incident field $\chi_\Omega e^{-i\phi}$ with the characteristic function $\chi_\Omega$ of the pupil $\Omega$. Furthermore, it is assumed that the incoming light is monochromatic and 
\begin{equation*}
 \phi(x,y) = \frac{2\pi}{\lambda}\Delta(x,y)   
\end{equation*}
is the perturbed phase at the considered wavelength $\lambda$. The optical path difference $\Delta$ was created by atmospheric turbulence or another source of perturbation. The pupil-plane intensities of the WFS are recorded on a detector $D \subset \R^2$ with $D$ bounded and $D\supseteq\Omega$.

\begin{theorem}
The propagation of a phase $\phi$ through an FQPM WFS with an arbitrary differential piston $\delta \in [-\lambda/2,\lambda/2]$ can be described by the intensity 
\begin{equation*}
    I(\phi)(x,y)=\chi_D\left(x,y\right)\left|A(\phi)(x,y)\right|^2\, ,
\end{equation*}
where the corresponding electric field $A$ is given by
\begin{equation}\label{fqpm_general}
    A(\phi)(x,y) = e^{\frac{i\pi}{\lambda}\delta}\left[\cos\left(\frac{\pi\delta}{\lambda}\right)\chi_\Omega e^{-i\phi}+i\sin\left(\frac{\pi\delta}{\lambda}\right)H(\chi_\Omega e^{-i\phi})\right](x,y)
\end{equation}
and $H$ is the 2d Hilbert transform
\begin{equation*}
    H(\phi)(x,y):=\frac{1}{\pi^2}\, p.v. \int_{\R^2} \frac{\phi(x',y')}{(x'-x)(y'-y)} \, dx'dy'
\end{equation*}
in the sense of the principal value $p.v.$. 
\end{theorem}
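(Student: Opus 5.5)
The plan is to write the Fourier-type WFS model as $A(\phi)=\mathcal{F}^{-1}\big(OTF_{\text{FQPM}}\cdot\mathcal{F}(\chi_\Omega e^{-i\phi})\big)$ and then decompose the mask. Write $\alpha:=\frac{2\pi}{\lambda}\delta$. On the set $xy<0$ the OTF equals $e^{i\alpha}$, and elsewhere it equals $1$ (the axes have measure zero and can be ignored). Using $\operatorname{sign}(x)\operatorname{sign}(y)=-1$ on $\{xy<0\}$ and $+1$ on $\{xy>0\}$, we get the pointwise identity
\begin{equation*}
OTF_{\text{FQPM}}(x,y)=\frac{1+e^{i\alpha}}{2}+\frac{1-e^{i\alpha}}{2}\,\operatorname{sign}(x)\operatorname{sign}(y).
\end{equation*}
Factoring out $e^{i\alpha/2}$ gives $\frac{1+e^{i\alpha}}{2}=e^{i\alpha/2}\cos(\alpha/2)$ and $\frac{1-e^{i\alpha}}{2}=-i\,e^{i\alpha/2}\sin(\alpha/2)$. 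Since $\alpha/2=\pi\delta/\lambda$, these are exactly the prefactor and the cosine coefficient appearing in \eqref{fqpm_general}.

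Because the Fourier transform is linear, the constant part of the mask reproduces $e^{i\pi\delta/\lambda}\cos(\pi\delta/\lambda)\,\chi_\Omega e^{-i\phi}$. It then remains to identify the Fourier multiplier $\operatorname{sign}(\xi)\operatorname{sign}(\eta)$ with the operator $H$ as defined in the statement, up to the sign that turns $-i$ into $+i$. The 2d kernel $\frac{1}{\pi^2}\frac{1}{(x'-x)(y'-y)}$ factorizes into a tensor product of two 1d Hilbert kernels, so $H=H_x H_y$. Here $H_x f(x)=\frac1\pi\,p.v.\!\int \frac{f(x')}{x'-x}dx'$, which is minus the usual Hilbert transform $\frac1\pi p.v.\int\frac{f(x')}{x-x'}dx'$. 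The standard 1d fact is that the usual Hilbert transform has multiplier $-i\operatorname{sign}(\xi)$ under the convention $\hat f(\xi)=\int f e^{-ix\xi}$. Hence $H_x$ has multiplier $i\operatorname{sign}(\xi)$, and the product $H$ has multiplier $(i\operatorname{sign}\xi)(i\operatorname{sign}\eta)=-\operatorname{sign}\xi\operatorname{sign}\eta$. Therefore $\mathcal{F}^{-1}\big(\operatorname{sign}\xi\operatorname{sign}\eta\,\hat u\big)=-Hu$. Multiplying by $-i e^{i\pi\delta/\lambda}\sin(\pi\delta/\lambda)$ yields $+i\,e^{i\pi\delta/\lambda}\sin(\pi\delta/\lambda)\,H(\chi_\Omega e^{-i\phi})$, which is the claimed second term. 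The product of the two signs is insensitive to the choice of sign convention in the Fourier transform, so the result does not depend on which convention (or which of $\mathcal{F}$, $\mathcal{F}^{-1}$) the optical model uses, as long as the normalization is consistent.

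The main obstacle is rigor rather than algebra. We must justify that $\chi_\Omega e^{-i\phi}\in L^2(\R^2)$, which holds because $\Omega$ is bounded and $\phi$ is real. We then need the principal-value definition of $H$ to agree with the $L^2$ Fourier-multiplier definition. To do this, I would first prove the multiplier identity on Schwartz functions, or on $L^2$ functions via the 1d Plancherel theory applied in each variable with Fubini. Then I would extend by density, using boundedness of the multiplier on $L^2$; the principal-value integral exists almost everywhere by the iterated 1d theory, since the 2d truncations are products of 1d truncations. Finally, the intensity is $|A|^2$ restricted to the detector, which accounts for the factor $\chi_D$.
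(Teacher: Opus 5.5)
Your proposal is correct and follows the paper's proof step by step. It uses the same Fourier-filtering model, the same decomposition $OTF_{\text{FQPM}}=\frac{1+e^{i\alpha}}{2}+\frac{1-e^{i\alpha}}{2}\,\text{sgn}(\xi)\,\text{sgn}(\eta)$, the same half-angle identities, and the same multiplier identity $\mathcal{F}^{-1}\left(\text{sgn}(\xi)\,\text{sgn}(\eta)\,\hat u\right)=-Hu$. The only difference is that the paper cites Stein--Weiss for this identity, whereas you derive it yourself via the tensor factorization into 1d Hilbert transforms (getting the sign right from the $(x'-x)(y'-y)$ orientation of the kernel) and sketch why the principal-value and multiplier definitions agree.
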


\begin{proof}
According to~\cite{Fauv16}, the intensity on a detector $D\subset\R^2$ of an arbitrary Fourier-type WFS is described by
\begin{equation}\label{eq_fouriertype}
     I(\phi)(x,y)=\chi_D\left(x,y\right) \abs{\mathcal{F}^{-1}\left(OTF\cdot \mathcal{F}\left(\chi_\Omega e^{-i\phi}\right)\right)(x,y)}^2\, .
\end{equation}
For an FQPM WFS, the OTF is given in eq.~\eqref{otf_fqpm}.

Using $H(\phi)(x,y) = - \mathcal{F}^{-1}\left[\text{sgn}_x \text{sgn}_y\mathcal{F}\left(\phi\right)\right](x,y)$ from~\cite{SteinWeiss1971} and the identities
    \begin{equation*}
        1 + e^{i\alpha} = 2e^{i\alpha/2}\cos{\left(\frac{\alpha}{2}\right)}\, , \quad 1 - e^{i\alpha} = -2ie^{i\alpha/2}\sin{\left(\frac{\alpha}{2}\right)}\, ,
    \end{equation*}
    yields
    \begin{align*}
        \mathcal{F}^{-1}\left[OTF_{\text{FQPM}} \cdot \mathcal{F}(\chi_\Omega e^{-i\phi})\right] &\overset{\eqref{otf_fqpm}}{=} \mathcal{F}^{-1}\left[\left(\chi _{\{xy>0\}}  + \chi_{\{xy<0\}}  e^{\frac{2i\pi}{\lambda}\delta}\right)\mathcal{F}\left({\chi_\Omega e^{-i\phi}}\right)\right] \\ &= \frac{1}{2}\mathcal{F}^{-1}\left[\left(\left(1+e^{\frac{2i\pi}{\lambda}\delta}\right) + \left(1-e^{\frac{2i\pi}{\lambda}\delta}\right) \text{sgn}_x \text{sgn}_y \right)\mathcal{F}\left({\chi_\Omega e^{-i\phi}}\right)\right]\\
        &= \frac{1}{2}\left[\left(1+e^{\frac{2i\pi}{\lambda}\delta}\right)\chi_\Omega e^{-i\phi} - \left(1-e^{\frac{2i\pi}{\lambda}\delta}\right) H\left(\chi_\Omega e^{-i\phi}\right)\right]\\
        &= e^{\frac{i\pi}{\lambda}\delta}\left[\cos\left(\frac{\pi\delta}{\lambda}\right)\chi_\Omega e^{-i\phi}+i\sin\left(\frac{\pi\delta}{\lambda}\right)H(\chi_\Omega e^{-i\phi})\right]\, .
    \end{align*}
\hfill \end{proof}

The linear field operator $A$ in eq.~\eqref{fqpm_general} consists of two terms. The first one reproduces the incoming field and the second provides its 2d Hilbert transform (up to a factor). The Hilbert transform operator $H$ is involutive and conserves energy, i.e., its $L^2$-norm. The differential piston $\delta$ acts as a regulator between the two contributions. For $\delta =0$ the corresponding FQPM is pointless since the propagator operator only contains the identity operator. The coronagraph case $\delta = \lambda/2$ gives a pure 2D Hilbert transform. If $\delta = \pm \lambda/4$, there is a uniform energy distribution between the two terms. This configuration corresponds to the iQuad WFS~\cite{FaHuShaRaLAM19_AO4ELTproc}. The notation \textit{Quad} refers to the Cartesian tessellation of the focal plane, whereas $i$ emphasizes the transformation of the coronagraph FQPM transparency function into the iQuad WFS mask (coefficients of the OTF equal to 1 and $i$ for $\delta =\lambda/4$ in Fig.~\ref{fig_fwfs}, right).

\section{Mathematical modeling of the iQuad WFS}\label{mathmodels}

We now consider the positive differential piston $\delta = \lambda/4$ and derive from eq.~\eqref{fqpm_general} the optical propagator
\begin{equation}\label{Aplus}
    A^+=e^{\frac{i\pi}{4}}\left[\frac{\mathcal{I}+i H}{\sqrt{2}}\right]\, ,
\end{equation}
where $\mathcal{I}$ represents the identity operator. The case $\delta=-\frac{\lambda}{4}$ corresponding to its conjugate, the $-$iQuad WFS, will be investigated in Section~\ref{double_iQuad}.

\begin{lemma}
    For the intensity $I$ of the iQuad WFS which is given by
    \begin{equation*}
        I(\phi)(x,y) = \chi_D\left(x,y\right)\left|A^+\left(\chi_\Omega e^{-i\phi}\right)(x,y)\right|^2
    \end{equation*}
    holds
    \begin{equation}\label{i1}
    I(\phi)(x,y) 
    = \chi_D\left(x,y\right)\left[Im\left(\left(\chi_\Omega e^{-i\phi}\right)H\left(\chi_\Omega e^{i\phi}\right)\right)+\frac{\left|H\left(\chi_\Omega e^{i\phi}\right)\right|^2+\chi_\Omega^2}{2}\right](x,y)\, .
\end{equation}
\end{lemma}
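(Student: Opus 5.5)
The proof is a direct pointwise expansion of the squared modulus. The only structural input is that $H$ commutes with complex conjugation. Write $u:=\chi_\Omega e^{-i\phi}$, so that $\bar u=\chi_\Omega e^{i\phi}$ and $|u|^2=\chi_\Omega^2$, since $\chi_\Omega$ is real. By eq.~\eqref{Aplus}, the phase factor $e^{\frac{i\pi}{4}}$ has modulus one and drops out, leaving
\begin{equation*}
    \left|A^+u\right|^2=\frac{1}{2}\left|u+iHu\right|^2 .
\end{equation*}
We expand with $|a+b|^2=|a|^2+|b|^2+2\,\mathrm{Re}(\bar a b)$, taking $a=u$ and $b=iHu$. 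This gives $\frac12\left(\chi_\Omega^2+|Hu|^2\right)+\mathrm{Re}\left(i\,\bar u\,Hu\right)$. Since $\mathrm{Re}(iz)=-\mathrm{Im}(z)$, the cross term equals $-\mathrm{Im}\left(\bar u\,Hu\right)$.

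The key step is to rewrite everything in terms of $H(\bar u)=H(\chi_\Omega e^{i\phi})$, which is the form appearing in eq.~\eqref{i1}. The kernel $\frac{1}{\pi^2(x'-x)(y'-y)}$ in the definition of $H$ is real. Hence, wherever the principal value integral exists (e.g.\ for $u\in L^2$, where $H$ is bounded), we have $\overline{Hu}=H(\bar u)$: the truncated integrals commute with conjugation, and so do their limits. Consequently $|Hu|=|\overline{Hu}|=|H(\chi_\Omega e^{i\phi})|$. For the cross term, $-\mathrm{Im}(z)=\mathrm{Im}(\bar z)$ gives
\begin{equation*}
    -\mathrm{Im}\left(\bar u\,Hu\right)=\mathrm{Im}\left(u\,\overline{Hu}\right)=\mathrm{Im}\left(\chi_\Omega e^{-i\phi}\,H\left(\chi_\Omega e^{i\phi}\right)\right).
\end{equation*}

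Substituting both identities and multiplying by $\chi_D$ yields eq.~\eqref{i1}. The only point needing care is the justification of $\overline{Hu}=H(\bar u)$ in the principal-value sense. This is immediate once we note that $u\in L^2(\R^2)$ for bounded $\Omega$, and that $H$ is the $L^2$-bounded Fourier multiplier $-\mathrm{sgn}_x\mathrm{sgn}_y$ used in the proof of the preceding theorem. That multiplier is real and even, so $H$ maps real functions to real functions, and therefore commutes with conjugation. The rest is routine algebra, and I expect no real obstacle.
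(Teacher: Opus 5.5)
Your proof is correct and is essentially the paper's argument. Both are a direct pointwise expansion of $\frac12|u+iHu|^2$ that rests on the single fact that $H$ maps real functions to real functions. The paper splits $e^{-i\phi}=\cos\phi-i\sin\phi$ and squares the real and imaginary parts explicitly, while you package the same computation more compactly via $|a+b|^2=|a|^2+|b|^2+2\,\mathrm{Re}(\bar a b)$ together with $\overline{Hu}=H(\bar u)$.
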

\begin{proof}
\begin{equation*}
    I(\phi)(x,y) = \chi_D\left(x,y\right)\left|A^+\left(\chi_\Omega e^{-i\phi}\right)(x,y)\right|^2
\end{equation*}
with
    \begin{align*}
    \left|A^+\left(\chi_\Omega e^{-i\phi}\right)(x,y)\right|^2
    &=\left|e^{\frac{i\pi}{4}}\right|^2 \cdot\left| \left[\frac{\chi_\Omega e^{-i\phi}+iH(\chi_\Omega e^{-i\phi})}{\sqrt{2}}\right](x,y)\right|^2 \\
    &= \frac{1}{2}\cdot\left| \left[ \chi_\Omega\left(\cos\left[\phi\right]-i\sin\left[\phi\right]\right)+iH\left(\chi_\Omega\left(\cos\left[\phi\right]-i\sin\left[\phi\right]\right)\right)  \right](x,y) \right|^2 \\
     &= \frac{1}{2}\cdot\left| \left[ \left(\chi_\Omega\cos\left[\phi\right]+H\left(\chi_\Omega\sin\left[\phi\right]\right) \right)+i\left(H\left(\chi_\Omega\cos\left[\phi\right]\right)-\chi_\Omega\sin\left[\phi\right] \right) \right](x,y) \right|^2 \\
     &= \frac{1}{2}\cdot \left[ \left(\chi_\Omega\cos\left[\phi\right]+H\left(\chi_\Omega\sin\left[\phi\right]\right) \right)^2+\left(H\left(\chi_\Omega\cos\left[\phi\right]\right)-\chi_\Omega\sin\left[\phi\right] \right)^2 \right](x,y)  \\
     &= \frac{1}{2}\cdot \left[ \chi_\Omega^2\cos^2\left[\phi\right]+2\chi_\Omega\cos\left[\phi\right]H\left(\chi_\Omega\sin\left[\phi\right]\right)+H^2\left(\chi_\Omega\sin\left[\phi\right]\right) \right.\\
     &\left.+H^2\left(\chi_\Omega\cos\left[\phi\right]\right)-2\chi_\Omega\sin\left[\phi\right]H\left(\chi_\Omega\cos\left[\phi\right]\right)+\chi_\Omega^2\sin^2\left[\phi\right] \right](x,y)  \\
      &= \frac{1}{2}\cdot \left[\underbrace{ \chi_\Omega^2\cos^2\left[\phi\right]+\chi_\Omega^2\sin^2\left[\phi\right]}_{=\chi_\Omega^2}+\underbrace{H^2\left(\chi_\Omega\cos\left[\phi\right]\right)+H^2\left(\chi_\Omega\sin\left[\phi\right]\right)}_{=\left|H(\chi_\Omega e^{i\phi})\right|^2} \right.\\
     &\left.+\underbrace{2\chi_\Omega\cos\left[\phi\right]H\left(\chi_\Omega\sin\left[\phi\right]\right)-2\chi_\Omega\sin\left[\phi\right]H\left(\chi_\Omega\cos\left[\phi\right]\right)}_{=2 Im\left(\left(\chi_\Omega e^{-i\phi}\right) H(\chi_\Omega e^{i\phi})\right)}  \right](x,y)\\
     &= \left[Im\left(\left(\chi_\Omega e^{-i\phi}\right)H\left(\chi_\Omega e^{i\phi}\right)\right)+\frac{\left|H\left(\chi_\Omega e^{i\phi}\right)\right|^2+\chi_\Omega^2}{2}\right](x,y)\, .
\end{align*}
\hfill\end{proof}

The above proof demonstrates that, for the iQuad WFS, the majority of phase information is concentrated within the pupil support $\Omega$, with a smaller portion leaking into the surrounding region bounded by the detector $D$ (see Fig.~\ref{ref_intensities}). For simplicity of notation, we omit the function support, but all functions are compactly supported by the finite pupil and detector. Consequently, while we write the norms in $L^2(\R^2)$, they are equivalent to $L^2(\Omega)$ (or $L^2(D)$) norms given finite support.

In~\cite{Fauv16} the authors introduced a numerical quantity, called meta-intensity, that describes the response of a Fourier-type WFS to an incoming turbulent phase $\phi$ around the sensor's operating point $\phi_r$. A closed AO loop will try to maintain the detector intensity as close as possible to the reference intensity $I(\phi_r)$. Conventionally, the operating point is set to $\phi_r=0$. Fig.~\ref{ref_intensities} shows an iQuad WFS standard intensity, reference intensity and the corresponding meta-intensity for an example wavefront. 

\begin{figure}[ht]
    \centering
    \includegraphics[width=0.24\linewidth]{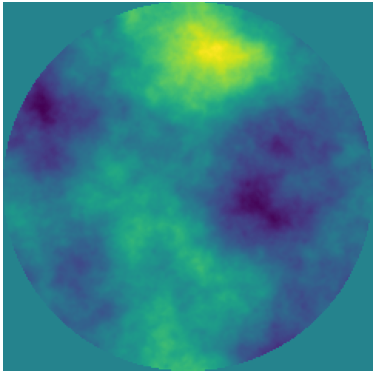}
    \includegraphics[width=0.24\linewidth]{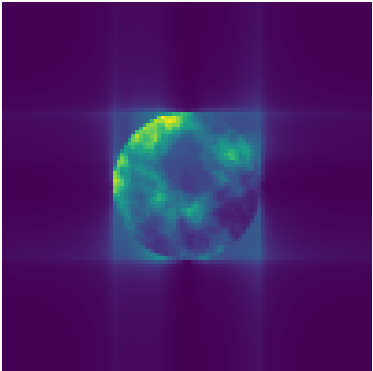}
    \includegraphics[width=0.24\linewidth]{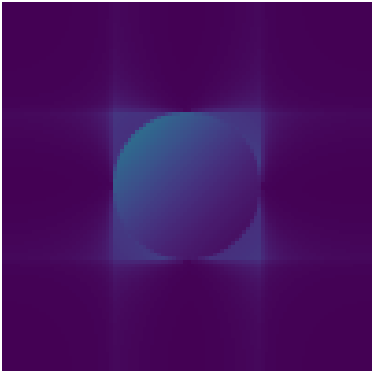}
    \includegraphics[width=0.24\linewidth]{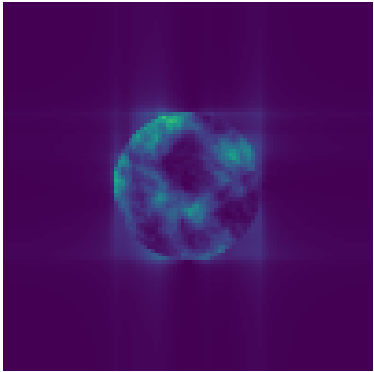}
    \caption{Example wavefront $\phi$, corresponding iQuad WFS intensity $I(\phi)$, reference intensity $I(0)$ and iQuad WFS meta-intensity $m(\phi)$ (from left to right).}
    \label{ref_intensities}
\end{figure}

\begin{definition}\label{def_i}
We define the iQuad WFS operator $\boldsymbol i$ by
$$\boldsymbol i = \boldsymbol i_1+\boldsymbol i_2$$
with
\begin{align}\label{def_i1}
\left(\boldsymbol i_1\phi\right)(x,y)&:=\chi_{\Omega}(x,y)\left.\dfrac{1}{\pi^2}\, p.v. \int_\Omega{\dfrac{\sin{\left[\phi(x',y')-\phi(x,y)\right] }}{(x'-x)(y'-y)}\, dx'\ dy'}\right.
\end{align}
and
\begin{align}\label{def_i2}
\left(\boldsymbol i_2\phi\right)(x,y)&:= \chi_{D}(x,y)\dfrac{1}{2\pi^4}\, p.v.  \int_\Omega{\int_\Omega{\dfrac{\cos{\left[\phi(x',y')-\phi(x'',y'')\right]-1 }}{(x'-x)(y'-y)(x''-x)(y''-y)} \, dx'' \ dy'' \ dy' \  } dx' }\, .
\end{align}
\end{definition}

\begin{theorem}
    The meta-intensity $m$ of the iQuad WFS for a phase $\phi$ is given by
    \begin{equation*}
       \left(m\phi\right)(x,y) = \left(\boldsymbol i\phi\right)(x,y)
    \end{equation*}
    with $i$ defined as in Definition~\ref{def_i}.
\end{theorem}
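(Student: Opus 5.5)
The plan is to take the meta-intensity around the standard operating point $\phi_r=0$, i.e. $m(\phi)=I(\phi)-I(0)$ in the sense of~\cite{Fauv16}, and to compute this difference term by term from the intensity formula~\eqref{i1} of the preceding Lemma. The formula~\eqref{i1} has three pieces: the imaginary-part cross term, $\tfrac12|H(\chi_\Omega e^{i\phi})|^2$, and $\tfrac12\chi_\Omega^2$. The last piece does not depend on $\phi$, so it cancels in $I(\phi)-I(0)$.

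\textbf{Step 1: the reference intensity.} For $\phi=0$ the function $H(\chi_\Omega)$ is real, because the kernel is real. Hence $Im(\chi_\Omega H\chi_\Omega)=0$ and
\[
I(0)=\chi_D\,\tfrac12\left[(H\chi_\Omega)^2+\chi_\Omega^2\right].
\]
So the cross term of $I(\phi)$ survives unchanged in $m(\phi)$.

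\textbf{Step 2: the cross term gives $\boldsymbol i_1$.} Write out $H(\chi_\Omega e^{i\phi})(x,y)$ as the principal-value integral over $\Omega$. Multiply it by $e^{-i\phi(x,y)}$, which is constant in the integration variables, and move this factor inside the integral. Taking the imaginary part of $e^{i(\phi(x',y')-\phi(x,y))}$ gives $\sin[\phi(x',y')-\phi(x,y)]$. Because $D\supseteq\Omega$, we have $\chi_D\chi_\Omega=\chi_\Omega$, and the result is exactly~\eqref{def_i1}.

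\textbf{Step 3: the quadratic terms give $\boldsymbol i_2$.} Write
\[
|H(\chi_\Omega e^{i\phi})|^2 = H(\chi_\Omega e^{i\phi})\,\overline{H(\chi_\Omega e^{i\phi})}
\]
as a product of two principal-value integrals over $\Omega$, one in the variables $(x',y')$ and one in $(x'',y'')$. Combine them into a single iterated integral with integrand
\[
\frac{e^{i(\phi'-\phi'')}}{(x'-x)(y'-y)(x''-x)(y''-y)}.
\]
Subtracting $(H\chi_\Omega)^2$ replaces $e^{i(\phi'-\phi'')}$ by $e^{i(\phi'-\phi'')}-1$. The kernel is symmetric under swapping $(x',y')\leftrightarrow(x'',y'')$, while $\sin(\phi'-\phi'')$ is antisymmetric, so the imaginary part integrates to zero. (It must vanish anyway, since the left-hand side is real.) What remains is $\cos(\phi'-\phi'')-1$ with prefactor $\tfrac12\cdot\tfrac1{\pi^4}$ and the factor $\chi_D$, which is~\eqref{def_i2}. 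Adding Steps 2 and 3 gives $m\phi=\boldsymbol i_1\phi+\boldsymbol i_2\phi=\boldsymbol i\phi$.

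\textbf{Main obstacle.} The delicate step is analytic rather than algebraic: justifying that the product of two principal-value integrals equals the iterated principal-value integral in~\eqref{def_i2}, and that the symmetrization argument is legitimate. I would first argue pointwise for $(x,y)$ outside the set where the kernels are singular, using that each one-dimensional Hilbert factor is a well-defined limit over symmetric excisions. The cleaner route is at the $L^2$ level: $H$ is bounded on $L^2$, so the identities can be checked on smooth phases and extended by density. Since the paper treats the p.v. integrals formally, I would state the identity in this sense and not belabor convergence further.
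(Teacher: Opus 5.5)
Your proposal is correct and follows essentially the same route as the paper: take $m(\phi)=I(\phi)-I(0)$, note that the cross term vanishes at $\phi=0$ and the $\chi_\Omega^2$ term cancels, and then expand the remaining terms of \eqref{i1} as single and double principal-value integrals. The only cosmetic difference is that the paper splits $e^{\pm i\phi}$ into $\cos\phi$ and $\sin\phi$, uses that $H$ maps real functions to real functions, and applies the addition theorems. You instead work with complex exponentials and drop the sine contribution by the swap-antisymmetry (or reality) argument, which amounts to the same computation.
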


\begin{proof}
The meta-intensity for Fourier-type WFSs, like the iQuad WFS, around $\phi_r=0$ with normalized flux is given by~\cite{Fauv16}
\begin{equation*}
    \left(m\phi\right)(x,y) = I(\phi)(x,y)-I(0)(x,y)\, .
\end{equation*}
The claim of the theorem follows by developing formula~\eqref{i1} with respect to spatial coordinates. We start with the reference intensity 
\begin{align*}
    I(0)(x,y) &= \chi_{D}(x,y)\left[\underbrace{Im\left(\chi_\Omega H\left(\chi_\Omega \right)\right)}_{=0}+\frac{\left|H\left(\chi_\Omega \right)\right|^2+\chi_\Omega^2}{2}\right](x,y) \\
    &=\chi_{D}(x,y)\left[\frac{\left|H\left(\chi_\Omega \right)\right|^2+\chi_\Omega^2}{2}\right](x,y)\, .
\end{align*}
Then,
\begin{align}\label{def_i11}
    \left(m\phi\right)(x,y) &= I(\phi)(x,y)-I(0)(x,y) \nonumber \\
    &= \chi_{D}(x,y)\left[Im\left(\left(\chi_\Omega e^{-i\phi}\right)H\left(\chi_\Omega e^{i\phi}\right)\right)+\frac{\left|H\left(\chi_\Omega e^{i\phi}\right)\right|^2+\chi_\Omega^2-\left|H\left(\chi_\Omega \right)\right|^2-\chi_\Omega^2}{2}\right](x,y) \nonumber\\
    &= \chi_{D}(x,y)\left[\underbrace{Im\left(\left(\chi_\Omega e^{-i\phi}\right)H\left(\chi_\Omega e^{i\phi}\right)\right)}_{=:m_1}+\underbrace{\frac{\left|H\left(\chi_\Omega e^{i\phi}\right)\right|^2-\left|H\left(\chi_\Omega \right)\right|^2}{2}}_{=:m_2}\right](x,y)\, . 
\end{align}
As the Hilbert transform of a real function is a real function, the term $m_1$ simplifies to
\begin{align*}    
    m_1(x,y)&= \Big[Im\left(\left(\chi_\Omega\cos[\phi]-i\chi_\Omega \sin[\phi]\right)H\left(\chi_\Omega\cos[\phi]+i\chi_\Omega\sin[\phi]\right)\right)\Big](x,y) \\
    &=\Big[\chi_\Omega\cos[\phi]H\left(\chi_\Omega\sin[\phi]\right)-\chi_\Omega\sin[\phi]H\left(\chi_\Omega\cos[\phi]\right)\Big](x,y)\\
    &=\chi_\Omega(x,y)\frac{1}{\pi^2}\, p.v. \int_\Omega\frac{\cos[\phi(x,y)]\sin[\phi(x',y')]-\sin[\phi(x,y)]\cos[\phi(x',y')]}{(x'-x)(y'-y)}\, dx'\ dy' \\
    &=\chi_\Omega(x,y)\frac{1}{\pi^2}\, p.v. \int_\Omega\frac{-\sin\left[\phi(x,y)-\phi(x',y')\right]}{(x'-x)(y'-y)}\, dx'\ dy' \\
    &=\left(\boldsymbol i_1 \phi\right)(x,y)\, .
\end{align*}
For the term $m_2$ holds
\begin{align*}
    \chi_{D}(x,y)m_2(x,y)&=\chi_{D}(x,y)\frac{1}{2}\left[\left|H\left(\chi_\Omega\left(\cos[\phi]+i\sin[\phi]\right)\right)\right|^2-\left|H\left(\chi_\Omega\right)\right|^2\right](x,y) \\
    &= \chi_{D}(x,y)\frac{1}{2}\left[H^2\left(\chi_\Omega\cos[\phi]\right)+H^2\left(\chi_\Omega\sin[\phi]\right)-\left|H\left(\chi_\Omega\right)\right|^2\right](x,y) \\
    &=\chi_{D}(x,y)\frac{1}{2\pi^4}\, p.v.\int_\Omega\,\int_\Omega \frac{\cos[\phi(x',y')]\cos[\phi(x'',y'')]}{(x'-x)(y'-y)(x''-x)(y''-y)}\\
    &+ \frac{\sin[\phi(x',y')]\sin[\phi(x'',y'')]-1}{(x'-x)(y'-y)(x''-x)(y''-y)} \, dx'\ dy'\ dx''\ dy''\\
    &=\chi_{D}(x,y)\frac{1}{2\pi^4}\, p.v.\int_\Omega\,\int_\Omega\frac{\cos[\phi(x',y')-\phi(x''-y'')]-1}{(x'-x)(y'-y)(x''-x)(y''-y)} \, dx'\ dy'\ dx''\ dy''\\
    &=\left(\boldsymbol i_2 \phi\right)(x,y)\, .
\end{align*}
Hence,
\begin{equation*}
    \left(m\phi\right)(x,y) = \chi_{D}(x,y)\left(m_1(x,y)+m_2(x,y)\right)= \left(\boldsymbol i_1\phi\right)(x,y)+\left(\boldsymbol i_2\phi\right)(x,y) = \left(\boldsymbol i\phi\right)(x,y)\, ,
\end{equation*}
which proves the claim of the theorem.
\end{proof}

For what follows, we assume the wavefront phase $\phi \in H^s(\R^2)$ with $s>1$. In astronomical AO this is typically satisfied under Kolmogorov turbulence~\cite{Roddier,RoWe96}, which implies $H^{11/6}\left(\R^2\right)$~\cite{Ellerbroek02,Eslitz13}. For ophthalmic or microscopic  AO, to the best of our knowledge, there are no established smoothness model. Hence, we assume $\phi \in H^s(\Omega)$ for some $1<s<2$.

\begin{proposition}\label{op_well_def}
The non-linear operator $\boldsymbol i : H^{s}(\R^2) \rightarrow L^2 (\R^2)$ for $1<s<2$, representing the iQuad WFS, is a well-defined operator between the given spaces above and Fr\'{e}chet differentiable.
\end{proposition}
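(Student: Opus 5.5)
Rather than working with the singular-integral form of Definition~\ref{def_i}, I would work with the operator representation obtained in the proof of the preceding theorem:
\begin{equation*}
\boldsymbol i\phi=\chi_D\Big[\chi_\Omega\cos\phi\, H(\chi_\Omega\sin\phi)-\chi_\Omega\sin\phi\, H(\chi_\Omega\cos\phi)+\tfrac12\big(H(\chi_\Omega\cos\phi)^2+H(\chi_\Omega\sin\phi)^2-H(\chi_\Omega)^2\big)\Big].
\end{equation*}
This writes $\boldsymbol i$ as a composition of three kinds of maps. The first is the Nemytskii maps $N_c:\phi\mapsto\chi_\Omega\cos\phi$ and $N_s:\phi\mapsto\chi_\Omega\sin\phi$. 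The second is the bounded linear operator $H$, which is a Fourier multiplier with symbol $-\mathrm{sgn}_x\mathrm{sgn}_y$, so it is unitary on $L^2(\R^2)$. The third is pointwise multiplication and squaring, followed by multiplication with $\chi_D$. Well-definedness and Fr\'echet differentiability then follow from the chain and product rules, provided each building block acts between suitable spaces.

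\textbf{Step 1: the Nemytskii maps.} Since $s>1$, Sobolev embedding gives $H^s(\R^2)\hookrightarrow C_b(\R^2)\cap L^2$. Hence $\phi\mapsto\cos\phi,\sin\phi$ map $H^s$ into $L^\infty$, and after multiplication by $\chi_\Omega$ (bounded support) into $L^\infty(\Omega)\subset L^2$.

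For differentiability, Taylor's theorem gives
\begin{equation*}
|\sin(\phi+h)-\sin\phi-\cos\phi\, h|\le \tfrac12|h|^2 .
\end{equation*}
The remainder is therefore bounded in $L^\infty$ by $\tfrac12\|h\|_\infty^2\le C\|h\|_{H^s}^2$, and the same holds for cosine. So $N_c,N_s:H^s\to L^\infty(\Omega)$ are Fr\'echet differentiable, with derivatives $h\mapsto-\chi_\Omega\sin\phi\, h$ and $h\mapsto\chi_\Omega\cos\phi\, h$. Since $\Omega$ is bounded, they are differentiable into $L^2$ as well.

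\textbf{Step 2: the products.} This is the main obstacle. $H$ is bounded on $L^2$ but not on $L^\infty$, so $H(\chi_\Omega\sin\phi)$ is only known to lie in $L^2$.

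For the first two terms this is enough. Each is a bilinear expression (an $L^\infty(\Omega)$ factor times an $L^2$ factor), and the bilinear map $L^\infty\times L^2\to L^2$, $(f,g)\mapsto fg$, is bounded. Differentiability then follows from the product rule.

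The quadratic term $\chi_D H(\cdot)^2$ needs an $L^4$ bound on $H(\chi_\Omega\sin\phi)$. I would obtain it from boundedness of $H$ on $L^p$ for $1<p<\infty$, since it is the tensor product of one-dimensional Hilbert transforms (Riesz, or Stein--Weiss as cited). Then $\chi_\Omega\sin\phi\in L^\infty(\Omega)\subset L^4$ gives $H(\chi_\Omega\sin\phi)\in L^4$. The map $g\mapsto g^2$ from $L^4$ to $L^2$ is Fr\'echet differentiable with derivative $2g\,\cdot$. Also $N_s,N_c:H^s\to L^4$ are differentiable by Step 1, and $H:L^4\to L^4$ is bounded and linear. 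Composing these gives $\boldsymbol i_2:H^s\to L^2$.

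The constant $\chi_D H(\chi_\Omega)^2$ is in $L^2$ by the same $L^4$ argument. This matters because $H(\chi_\Omega)$ has logarithmic singularities on $\partial\Omega$, so it lies in every $L^p$ with $p<\infty$ but not in $L^\infty$.

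\textbf{Step 3: assembly.} I would collect the derivative $\boldsymbol i'(\phi)h$ explicitly by the product and chain rules and evaluate it at $\phi=0$. This yields a linear operator involving $\chi_\Omega H(\chi_\Omega h)$ and $H(\chi_\Omega)H(\chi_\Omega h)$, which sets up the subsequent linearization. Finally, I would remark that the restriction $s<2$ is not needed for the argument; only $s>1$, via the $L^\infty$ embedding, is used.
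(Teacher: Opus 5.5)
Your argument is correct, but it is not the paper's route. The paper gives no argument of its own and defers entirely to the general result for Fourier-type sensors with a bounded OTF in \cite[Remark 2.2]{HuNeuSha_2023}. You instead prove the statement directly for the iQuad: you work with the operator form $m_1+m_2$ from the meta-intensity theorem rather than the p.v.\ integrals, treat $\phi\mapsto\chi_\Omega\cos\phi$ and $\phi\mapsto\chi_\Omega\sin\phi$ as Nemytskii maps into $L^\infty(\Omega)$ via $H^s\hookrightarrow C_b$, and conclude with the chain and product rules. The citation buys uniformity over the whole Fourier-type class, but it hides where the mask actually enters; your route makes that point explicit. An $L^\infty$ bound on the OTF only makes the filter bounded on $L^2$, so a priori the field is in $L^2$ and the intensity only in $L^1$. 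Putting $\tfrac12|H(\chi_\Omega e^{i\phi})|^2$ into $L^2$ requires the field in $L^4$, and you obtain this from the $L^p$-boundedness of the product Hilbert transform $H=H_xH_y$. This ingredient is genuinely needed. For example, take a unimodular conical mask $e^{ic|\xi|}$ with $c$ matched to the radius of a circular pupil: the field then focuses at the pupil center to a $|x|^{-1/2}$ singularity, whose square is not in $L^2_{\mathrm{loc}}(\R^2)$. Your chain-rule argument also delivers Fr\'echet differentiability together with the derivative formula. The paper, by contrast, needs this proposition to upgrade the G\^ateaux computation in Theorem~\ref{3.9_gateaux}.

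There is one slip in Step~3, though the proposition does not depend on it. At $\phi=0$ the derivatives of both quadratic terms vanish, because they carry the factors $H(\chi_\Omega\sin\phi\,h)$ and $H(\chi_\Omega\sin\phi)$, both of which are zero there. What survives is $\chi_\Omega H(\chi_\Omega h)-\chi_\Omega h\,H(\chi_\Omega)$, which is exactly the p.v.\ expression of Theorem~\ref{3.9}. A term $H(\chi_\Omega)H(\chi_\Omega h)$ does not appear. If it did, the linearization would leak outside $\Omega$, contradicting the support property noted after Theorem~\ref{3.9}.
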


\begin{proof}
The proof is given in~[\cite{HuNeuSha_2023}, Remark 2.2] for $\phi \in H^s$ with $s>1$ and $OTF\in\mathcal{L}^\infty(\R^2)$.
\end{proof}

The Fr\'{e}chet derivative of the iQuad WFS operator $\boldsymbol i$ is essential for the calculation of linear approximations or the application of nonlinear iterative algorithms from the field of inverse problems.

\begin{theorem}\label{3.9_gateaux}
The Fr\'{e}chet derivative $\boldsymbol i'(\phi) \in  \mathcal{L}\left(H^{s},L^2\right)$, $1<s<2$, at $\phi \in \mathcal{D}\left(\boldsymbol i\right)$ of the non-linear iQuad WFS operator $\boldsymbol i: \mathcal{D}\left(\boldsymbol i\right)\subseteq H^{s}\left(\R^2\right)\rightarrow L^2\left(\R^2\right)$ is given by
\begin{equation*}\label{theo_gateaux_def1}
\left(\boldsymbol i'(\phi)\ \psi\right)\left(x,y\right) = \left(\boldsymbol i_1'(\phi)\ \psi\right)\left(x,y\right) +\left(\boldsymbol i_2'(\phi)\ \psi\right)\left(x,y\right) 
\end{equation*}
with
\begin{equation*}
\left(\boldsymbol i_1'(\phi)\ \psi\right)\left(x,y\right) 
=\chi_\Omega\left(x,y\right)\dfrac{1}{\pi^2}  \, p.v.\int_\Omega{\dfrac{\cos{\left[\phi(x',y')-\phi(x,y)\right]}\left[\psi(x',y')-\psi(x,y)\right]}{(x'-x)(y'-y)}\, dx' \ dy'}
\end{equation*}
and
\begin{align*}
\left(\boldsymbol i_2'(\phi)\ \psi\right)\left(x,y\right) 
&=-\chi_D\left(x,y\right)\dfrac{1}{2\pi^4} \, p.v. \int_\Omega\int_\Omega \sin{\left[\phi(x',y')-\phi(x'',y'')\right]}\\
&\cdot {\dfrac{\left[\psi(x',y')-\psi(x'',y'')\right]}{(x'-x)(y'-y)(x''-x)(y''-y)}\, dx'' \ dy''\ dy' \ dx'}\, .
\end{align*}
\end{theorem}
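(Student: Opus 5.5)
Since Proposition~\ref{op_well_def} already establishes that $\boldsymbol i$ is Fr\'echet differentiable from $H^s$ to $L^2$, I only need to identify the derivative. It therefore suffices to compute the G\^ateaux derivative $\lim_{t\to 0}\frac{1}{t}\left(\boldsymbol i(\phi+t\psi)-\boldsymbol i(\phi)\right)$ in $L^2$. Whenever the Fr\'echet derivative exists, it coincides with this limit.

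I will work with the operator representation rather than the kernels directly. From the proof of the meta-intensity theorem we have $\boldsymbol i_1\phi=\chi_\Omega\left[\cos\phi\, H(\chi_\Omega\sin\phi)-\sin\phi\, H(\chi_\Omega\cos\phi)\right]$ and $\boldsymbol i_2\phi=\frac{\chi_D}{2}\left[(H(\chi_\Omega\cos\phi))^2+(H(\chi_\Omega\sin\phi))^2-(H\chi_\Omega)^2\right]$. The maps $\phi\mapsto\cos\phi$ and $\phi\mapsto\sin\phi$ are differentiable from $H^s$ (with $s>1$, so $H^s\hookrightarrow L^\infty$ locally on the bounded $\Omega$) into $L^2(\Omega)\cap L^\infty$, with derivatives $-\sin\phi\,\psi$ and $\cos\phi\,\psi$. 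The operator $H$ is bounded on $L^2$. I then apply the product and chain rules.

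For $\boldsymbol i_1$ the product rule gives
\[
\chi_\Omega\left[-\sin\phi\,\psi\, H(\chi_\Omega\sin\phi)+\cos\phi\, H(\chi_\Omega\cos\phi\,\psi)-\cos\phi\,\psi\, H(\chi_\Omega\cos\phi)+\sin\phi\, H(\chi_\Omega\sin\phi\,\psi)\right].
\]
Writing the $H$'s as principal-value integrals and combining with $\cos a\cos b+\sin a\sin b=\cos(a-b)$ produces the two pieces $\cos[\phi(x',y')-\phi(x,y)]\psi(x',y')$ and $-\cos[\phi(x',y')-\phi(x,y)]\psi(x,y)$. Together these give the stated kernel with $\psi(x',y')-\psi(x,y)$.

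For $\boldsymbol i_2$ the derivative is
\[
\chi_D\left[H(\chi_\Omega\cos\phi)\,H(-\chi_\Omega\sin\phi\,\psi)+H(\chi_\Omega\sin\phi)\,H(\chi_\Omega\cos\phi\,\psi)\right].
\]
Expanding both products as double integrals over $(x',y')$ and $(x'',y'')$ gives the integrand $\left[\sin\phi''\cos\phi'\,\psi''-\cos\phi''\sin\phi'\,\psi''\right]$ over the product kernel. Symmetrizing in the two variables, which is allowed because the kernel is symmetric, and using $\sin(a-b)$ yields $-\frac12\sin[\phi'-\phi'']\left(\psi'-\psi''\right)$. Together with the prefactor this gives $\frac{1}{2\pi^4}$ with a minus sign. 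Equivalently, I can differentiate $\cos[\phi'-\phi'']-1$ in Definition~\ref{def_i} directly under the integral.

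The main obstacle is justifying the passage from operator products of $H$ to the combined principal-value kernels. In particular, for $\boldsymbol i_1$ the split $\psi(x',y')-\psi(x,y)$ involves a term $\psi(x,y)\cos\phi(x,y)\, H(\chi_\Omega \cos\phi)$ that has to be recombined into a single p.v. integral. I would handle this by working at the level of the bounded $L^2$ operators throughout and interpreting each kernel formula as shorthand for the corresponding composition of $H$ with multiplication operators. This is consistent with how Definition~\ref{def_i} was derived, so no separate convergence of the combined singular integrals is needed. The remaining point is that the G\^ateaux limit holds in $L^2$. This follows from the Taylor remainders of $\sin$ and $\cos$ being $O(t^2\|\psi\|_\infty^2)$ in $L^\infty\cap L^2(\Omega)$, combined with the $L^2$-boundedness of $H$ and the boundedness of $\chi_\Omega\cos\phi$ and $\chi_\Omega\sin\phi$.
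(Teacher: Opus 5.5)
Your argument is sound and takes a genuinely different route from the paper. The paper works directly on the kernels of Definition~\ref{def_i}. It substitutes the Lagrange-form Taylor expansions of $\sin$ and $\cos$ into the principal-value integrals, lets $t\to 0$, and simply drops the remainder. It takes for granted that $t\to 0$ can be interchanged with the p.v. limit, even though $\theta$ depends on the integration point and sits under a singular kernel.

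You instead differentiate the operator form $\boldsymbol i_1\phi=\chi_\Omega[\cos\phi\,H(\chi_\Omega\sin\phi)-\sin\phi\,H(\chi_\Omega\cos\phi)]$ and the analogous quadratic form of $\boldsymbol i_2$. You use the Nemytskii chain rule ($H^s\hookrightarrow L^\infty$ for $s>1$) and the product rule, and return to kernels only at the end, via $\cos(a-b)$ and a symmetrization in $(x',y')\leftrightarrow(x'',y'')$. This buys an actual convergence statement in norm. With the fix below, it even gives a self-contained proof of Fr\'echet differentiability, independent of Proposition~\ref{op_well_def}, at the cost of some extra trigonometric bookkeeping.

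The recombination you flag as the main obstacle is actually harmless. For fixed $(x,y)$, the factors $\cos\phi(x,y)$, $\sin\phi(x,y)$ and $\psi(x,y)$ are constants. So the $\epsilon$-truncated integral of the combined kernel is exactly the same linear combination of truncated Hilbert integrals. For $\boldsymbol i_2'$, the integrand is a finite sum of products $f(x',y')\,g(x'',y'')$ and the truncation region is a product set, so the truncated double integral factorizes. Hence the stated kernel formulas hold pointwise wherever the individual Hilbert transforms exist, not merely as shorthand.

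Two small repairs are needed.

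First, $L^2$-boundedness of $H$ does not suffice for $\boldsymbol i_2$. There every term is a product $H(u)\,H(v)$, which for $u,v\in L^2$ lies only in $L^1$. Boundedness of $\chi_\Omega\cos\phi$ does not transfer to $H(\chi_\Omega\cos\phi)$, which is in general unbounded (logarithmic singularities at $\partial\Omega$). You can fix this in either of two ways:
\begin{itemize}
\item Use that $H$, being a composition of one-dimensional Hilbert transforms in $x$ and in $y$, is bounded on $L^4$. All your arguments $u$ are bounded and supported in the bounded set $\Omega$, so H\"older gives the required $L^2$ estimates.
\item Since Proposition~\ref{op_well_def} already gives $L^2$ convergence of the difference quotients, it is enough to identify the limit in $L^1(D)$, and your $L^2$ bounds do give that via Cauchy--Schwarz.
\end{itemize}

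Second, your intermediate integrand for $\boldsymbol i_2$ has the primes mixed up. With $\psi$ carried by $(x'',y'')$, the two products give $(\sin\phi'\cos\phi''-\cos\phi'\sin\phi'')\psi''=\sin(\phi'-\phi'')\psi''$. Your expression equals $-\sin(\phi'-\phi'')\psi''$ and would symmetrize to the wrong sign. Your final $-\tfrac12\sin(\phi'-\phi'')(\psi'-\psi'')$ is nevertheless correct, as your direct differentiation of $\cos[\phi'-\phi'']-1$ confirms.
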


\begin{proof}
We know from Proposition~\ref{op_well_def} that the iQuad WFS operator $\boldsymbol i$ is Fr\'{e}chet differentiable. Hence, the Fr\'{e}chet derivative must coincide with the G\^{a}teaux derivative and it suffices to calculate the latter.

We utilize the representation (cf Taylor's theorem with Lagrange form of the remainder)
\begin{equation}\label{eq:taylorsin}
\begin{split}
\sin\left(\phi+\psi\right) &= \sin\left(\phi\right)+\sin'\left(\phi\right)\psi+\dfrac{1}{2}\sin''\left(\phi+\theta\psi\right) \psi^2 \\
&= \sin\left(\phi\right)+\cos\left(\phi\right)\psi-\dfrac{1}{2}\sin\left(\phi+\theta\psi\right) \psi^2\, , \\
\cos\left(\phi+\psi\right) &= \cos\left(\phi\right)+\cos'\left(\phi\right)\psi+\dfrac{1}{2}\cos''\left(\phi+\theta\psi\right) \psi^2 \\
&=\cos\left(\phi\right)-\sin\left(\phi\right)\psi-\dfrac{1}{2}\cos\left(\phi+\theta\psi\right) \psi^2
\end{split}
\end{equation}
for a $\theta = \theta\left(\phi,\psi\right) \in \left(0,1\right)$.

The derivatives $\boldsymbol i_1'(\phi)$ and $\boldsymbol i_2'(\phi)$ compute as
{\small \begin{equation*} \label{gateaux_deriv1}
\begin{split}
\left(\boldsymbol i_1'(\phi)\ \psi\right)\left(x,y\right)
&=\lim\limits_{t \rightarrow 0} \dfrac{\left(\boldsymbol i_1\left(\phi+t\psi\right)\right)\left(x,y\right)-\left(\boldsymbol i_1\phi\right)\left(x,y\right)}{t} \\
&=\lim\limits_{t\rightarrow 0} \chi_\Omega\left(x,y\right)\dfrac{1}{\pi^2} \, p.v.\int_\Omega
\dfrac{\sin {\left[\phi(x',y')+t\psi(x',y')-\phi(x,y)-t\psi(x,y)\right]}}{t(x'-x)(y'-y)} \\
&-\dfrac{\sin{\left[\phi(x',y')-\phi(x,y)\right]}}{t(x'-x)(y'-y)}\, dx' \ dy' \\ 
&\overset{\eqref{eq:taylorsin}}{=}\lim\limits_{t\rightarrow 0}\chi_\Omega\left(x,y\right)  \dfrac{1}{\pi^2} \, p.v. \int_\Omega\dfrac{\sin'{\left[\phi(x',y')-\phi(x,y)\right]}\left[t\psi(x',y')-t\psi(x,y)\right]}{t(x'-x)(y'-y)} \\ 
&+\dfrac{\dfrac{1}{2}\sin''{\left[\phi(x',y')-\phi(x,y)+\theta t \left[\psi(x',y')-\psi(x,y)\right]\right]}t^2\left[\psi(x',y')-\psi(x,y)\right]^2}{t(x'-x)(y'-y)}\, dx' \ dy' \\
&=\chi_\Omega\left(x,y\right)\dfrac{1}{\pi^2} \, p.v. \int_\Omega{\dfrac{\cos{\left[\phi(x',y')-\phi(x,y)\right]}\left[\psi(x',y')-\psi(x,y)\right]}{(x'-x)(y'-y)}\, dx' \ dy'}\, ,
\end{split}
\end{equation*}
\begin{equation*} \label{gateaux_deriv2}
\begin{split}
\left(\boldsymbol i_2'(\phi)\ \psi\right)\left(x,y\right)
&=\lim\limits_{t \rightarrow 0} \dfrac{\left(\boldsymbol i_2\left(\phi+t\psi\right)\right)\left(x,y\right)-\left(\boldsymbol i_2\phi\right)\left(x,y\right)}{t} \\
&=\lim\limits_{t\rightarrow 0}\chi_D\left(x,y\right) \dfrac{1}{2\pi^4} \, p.v.\left(\int_\Omega\int_\Omega\dfrac{\cos {\left[\phi(x',y')+t\psi(x',y')-\phi(x'',y'')-t\psi(x'',y'')\right]}}{t(x'-x)(y'-y)(x''-x)(y''-y)}\right.\\
&\left.-\dfrac{1}{t(x'-x)(y'-y)(x''-x)(y''-y)}\, dx'' \ dy'' \ dy' \ dx'\right. \\ 
&\left.-\int_\Omega\int_\Omega{\dfrac{\cos{\left[\phi(x',y')-\phi(x'',y'')\right]}-1}{t(x'-x)(y'-y)(x''-x)(y''-y)}\, dx'' \ dy'' \ dy' \ dx'}\right) \\ 
&\overset{\eqref{eq:taylorsin}}{=}\lim\limits_{t\rightarrow 0}\chi_D\left(x,y\right)  \dfrac{1}{2\pi^4}\, p.v.\bigg(  \int_\Omega\int_\Omega\dfrac{\cos'{\left[\phi(x',y')-\phi(x'',y'')\right]}\left[t\psi(x',y')-t\psi(x'',y'')\right]}{t(x'-x)(y'-y)(x''-x)(y''-y)}\bigg. \\ 
&\bigg.+\dfrac{\dfrac{1}{2}\cos''{\left[\phi(x',y')-\phi(x'',y'')+\theta t \left[\psi(x',y')-\psi(x'',y'')\right]\right]}}{t(x'-x)(y'-y)(x''-x)(y''-y)}\bigg. \\
&\bigg.\cdot t^2\left[\psi(x',y')-\psi(x'',y'')\right]^2 \, dx'' \ dy'' \ dy' \ dx'\bigg) \\
&=-\chi_D\left(x,y\right)\dfrac{1}{2\pi^4} \\
&\cdot \, p.v.\int_\Omega\int_\Omega{\dfrac{\sin{\left[\phi(x',y')-\phi(x'',y'')\right]}\left[\psi(x',y')-\psi(x'',y'')\right]}{(x'-x)(y'-y)(x''-x)(y''-y)}\, dx'' \ dy''\ dy' \ dx'} \,.
\end{split}
\end{equation*}
}
\hfill\end{proof}

In closed-loop AO, the iQuad WFS measures already corrected residual wavefronts. Hence, in a stable system, the residual phases $\phi$ are relatively small and a linear approximation of the WFS operator around $\phi = 0$ is typically sufficient.

\begin{theorem}\label{3.9}
The linearization $\boldsymbol i^{lin}:H^{s}\left(\R^2\right)\rightarrow L^2\left(\R^2\right)$, $1<s<2$ by means of the Fr\'{e}chet derivative of the operator $\boldsymbol i$ introduced in Definition~\ref{def_i} around $0$ is given by
\begin{equation}\label{liniQuad}
\left(\boldsymbol i^{lin}\phi\right)(x,y)=\chi_{\Omega}(x,y)\dfrac{1}{\pi^2}\, p.v.  \int_\Omega{\dfrac{\phi(x',y')-\phi(x,y)}{(x'-x)(y'-y)}\, dx' \ dy'}\, .
\end{equation}
\end{theorem}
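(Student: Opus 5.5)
The linearization is defined as $\boldsymbol i^{lin} := \boldsymbol i'(0)$, the Fréchet derivative at the operating point $\phi_r = 0$. So the plan is to specialize the formula of Theorem~\ref{3.9_gateaux} to $\phi = 0$ and rename the direction $\psi$ as $\phi$. Proposition~\ref{op_well_def} guarantees that $\boldsymbol i$ is Fréchet differentiable from $H^s(\R^2)$ to $L^2(\R^2)$ for $1<s<2$. Hence $\boldsymbol i'(0)\in\mathcal{L}(H^s,L^2)$ exists and is given by Theorem~\ref{3.9_gateaux}, and no new analytic estimate is needed.

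The computation splits along the decomposition $\boldsymbol i' = \boldsymbol i_1' + \boldsymbol i_2'$.

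\textbf{First term.} At $\phi=0$ we have $\cos[\phi(x',y')-\phi(x,y)] = \cos 0 = 1$, so
\begin{equation*}
\left(\boldsymbol i_1'(0)\,\psi\right)(x,y) = \chi_\Omega(x,y)\frac{1}{\pi^2}\, p.v.\int_\Omega \frac{\psi(x',y')-\psi(x,y)}{(x'-x)(y'-y)}\, dx'\, dy'.
\end{equation*}

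\textbf{Second term.} The integrand of $\boldsymbol i_2'(0)\psi$ carries the factor $\sin[\phi(x',y')-\phi(x'',y'')] = \sin 0 = 0$. So it vanishes pointwise for every $(x',y',x'',y'')$, and therefore $\boldsymbol i_2'(0)=0$ as an operator. Writing $\phi$ in place of $\psi$ then gives exactly eq.~\eqref{liniQuad}.

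The main point needing care is the principal value in $\boldsymbol i_2'(0)$. We must justify that an integrand which is identically zero yields zero. This holds because the truncated integrals defining the principal value are all zero, so their limit is zero as well.

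A second subtlety is that the integral for $\boldsymbol i_1'(0)$ is a genuine principal value rather than a Lebesgue integral. It should be read with the same convention as in Definition~\ref{def_i} and Theorem~\ref{3.9_gateaux}, namely $\chi_\Omega\,[H(\chi_\Omega\phi) - \phi\, H(\chi_\Omega)]$. It is well defined in $L^2$ because it coincides with $\boldsymbol i'(0)\phi$, which Proposition~\ref{op_well_def} already places in $L^2$. Consequently the statement follows directly, with no further convergence arguments.
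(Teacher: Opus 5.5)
Your proposal is correct and follows the paper's own proof: both evaluate the Fr\'echet derivative from Theorem~\ref{3.9_gateaux} at $\phi=0$, where $\cos 0=1$ and $\sin 0=0$ annihilate the $\boldsymbol i_2'$ term, and then take the direction $\psi=\phi$. Your remarks on the vanishing principal value and on how to read the remaining p.v.\ integral are extra justification that the paper leaves implicit.
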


\begin{proof}
The claim immediately follows from $\left(\boldsymbol i^{lin}\phi\right):= \left(  \boldsymbol i'  (0) \ \phi\right)$, i.e., considering the Fr\'{e}chet derivative from Theorem~\ref{3.9_gateaux} in $\phi=0$ and in direction $\psi=\phi$.
\end{proof}

The linear term $i^{lin}$ exhibits some interesting properties about its support. Specifically, eq.~\eqref{liniQuad} reveals that its support is precisely aligned with the pupil $\Omega$. In other words, there is no linear information outside the pupil region. When linear reconstruction algorithms are applied for wavefront reconstruction, it can thus be beneficial to cut off the intensity information outside the pupil $\Omega$.

\bigskip\par

In order to reconstruct the wavefront from iQuad data, knowledge of the adjoint of the Fr\'{e}chet derivative can be required.
As discussed in Proposition~\ref{op_well_def}, it is assumed that the phase $\phi$ belongs to the Sobolev space $H^s(\R^2)$ with a smoothness index $1<s<2$. However, the WFS data only belong to $L^2$. In the following, we use the embedding operator $E_s$ as investigated in Ref.~\cite{Teschke2004}, defined by
\begin{align*}
    E_s:H^s\left(\R^2\right)&\rightarrow L^2\left(\R^2\right)\, , \\
    E_s\phi &= \phi\, .
\end{align*}
The operator $E_s$ is continuous. If $\phi \in H^s(\Omega)$ with $\Omega$ bounded, then $E_s$ is compact.
For $\tilde{\boldsymbol{i}}:L^2(\R^2)\rightarrow L^2(\R^2)$ $$ \left(\boldsymbol i\phi\right) =: \left(\tilde{\boldsymbol{i}}\left(E_s\phi\right)\right) \qquad \implies \qquad \boldsymbol i^* = E_s^*\circ \tilde{\boldsymbol{i}}^*\, .$$
The described Sobolev embedding operator and its adjoint are used in the following proposition.
\begin{proposition}\label{roof_frechet_adjoint}
The adjoint operator $\boldsymbol i'(\phi)^*:L^2\left(\R^2\right)\rightarrow H^{s}\left(\R^2\right)$, $1<s<2$, of the iQuad sensor Fr\'{e}chet derivative in $\phi$ is represented by
\begin{equation*}\label{frech_def_a1}
\left(\boldsymbol i'(\phi)^*\varphi\right)\left(x,y\right)
=E_s^*\left(\left(\tilde{\boldsymbol i_1}'(\phi)^* \varphi\right) +\left(\tilde{\boldsymbol i_2}'(\phi)^*\varphi\right)\right)\left(x,y\right)
\end{equation*}
with $\tilde{\boldsymbol i_1}'(\phi)^*,\tilde{\boldsymbol i_2}'(\phi)^*:L^2\left(\R^2\right)\rightarrow L^2\left(\R^2\right)$ given by
\begin{equation*}\label{frech_def_a2}
\begin{aligned}
\left(\tilde{\boldsymbol i_1}'(\phi)^*\varphi\right)\left(x,y\right) &=\chi_{\Omega}\left(x,y\right)\dfrac{1}{\pi^2} \, p.v.\int_\Omega{\dfrac{\cos{\left[\phi(x',y')-\phi(x,y)\right]}\left[\varphi(x',y')-\varphi(x,y)\right]}{(x'-x)(y'-y)}\, dx'  \ dy'}\, , \\
\left(\tilde{\boldsymbol i_2}'(\phi)^*\varphi\right)\left(x,y\right) 
&= -\chi_{\Omega}\left(x,y\right)\dfrac{1}{\pi^4}\, p.v. \int_D\int_\Omega{\dfrac{\sin{\left[\phi(x,y)-\phi(x'',y'')\right]}\varphi(x',y')}{(x'-x)(y'-y)(x''-x')(y''-y')} \, dx'' \ dy'' \ dy' \ dx'}
\end{aligned}
\end{equation*}
and the embedding operator $E_s:H^s\left(\R^2\right)\rightarrow L^2\left(\R^2\right)$ with adjoint
\begin{equation*}
\left(F\left(E_s^*\psi\right)\right)(\xi,\eta)=\left(1+\xi^2+\eta^2\right)^{-s}\left(F\left(\psi\right)\right)(\xi,\eta)\, .
\end{equation*}
\end{proposition}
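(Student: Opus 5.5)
The plan is to compute the adjoint directly from its defining relation. Throughout I use the factorization $\boldsymbol i'(\phi)=\tilde{\boldsymbol i}'(\phi)\circ E_s$ introduced before the proposition. Since $\boldsymbol i'(\phi)^*=E_s^*\circ\tilde{\boldsymbol i}'(\phi)^*$, the proof splits into two independent parts: the adjoint of the embedding, and the $L^2$-adjoints of $\tilde{\boldsymbol i_1}'(\phi)$ and $\tilde{\boldsymbol i_2}'(\phi)$.

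\emph{Embedding.} I equip $H^s(\R^2)$ with the Fourier inner product
\[
\langle u,v\rangle_{H^s}=\int(1+\xi^2+\eta^2)^s\,F u\,\overline{F v}\,d\xi\, d\eta .
\]
By Plancherel, $\langle E_s u,\psi\rangle_{L^2}=\int F u\,\overline{F\psi}$. Writing $F\psi=(1+\xi^2+\eta^2)^{s}\big[(1+\xi^2+\eta^2)^{-s}F\psi\big]$ then identifies $F(E_s^*\psi)=(1+\xi^2+\eta^2)^{-s}F\psi$. This part is routine.

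\emph{Structural step.} To avoid manipulating iterated principal value integrals directly, I rewrite both derivative operators from Theorem~\ref{3.9_gateaux} as finite sums of compositions of the 2d Hilbert transform $H$ with multiplication operators. The multipliers are the bounded real functions $\chi_\Omega$, $\chi_D$, $\cos\phi$ and $\sin\phi$.

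For $\boldsymbol i_1'$, expand $\cos[\phi(x')-\phi(x)]=\cos\phi(x')\cos\phi(x)+\sin\phi(x')\sin\phi(x)$. This gives
\[
\tilde{\boldsymbol i_1}'(\phi)\psi=\chi_\Omega\Big[\cos\phi\, H(\chi_\Omega\cos\phi\,\psi)+\sin\phi\, H(\chi_\Omega\sin\phi\,\psi)-\psi\big(\cos\phi\, H(\chi_\Omega\cos\phi)+\sin\phi\, H(\chi_\Omega\sin\phi)\big)\Big].
\]
For $\boldsymbol i_2'$, the integrand $\sin[\phi(x')-\phi(x'')]$ is antisymmetric under swapping $x'$ and $x''$, and $\psi(x')-\psi(x'')$ is also antisymmetric. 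Their product is therefore symmetric, so the two halves coincide. Expanding the sine then gives
\[
\tilde{\boldsymbol i_2}'(\phi)\psi=-\frac{1}{\pi^4}\,\chi_D\big[H(\chi_\Omega\sin\phi\,\psi)\,H(\chi_\Omega\cos\phi)-H(\chi_\Omega\cos\phi\,\psi)\,H(\chi_\Omega\sin\phi)\big]\cdot\pi^4 .
\]
Here the normalisation is absorbed in $H$, which already carries the factor $1/\pi^2$. Each summand has the form $M_a H M_b\psi$, or $M_c\psi$ with $M_a,M_b,M_c$ multiplication operators, which are bounded on $L^2$.

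\emph{Adjoints of the building blocks.} The kernel $1/((x'-x)(y'-y))$ is real and invariant under exchanging $(x,y)\leftrightarrow(x',y')$. Hence $H^*=H$ on $L^2$; equivalently, $\operatorname{sgn}_x\operatorname{sgn}_y$ is a real, even Fourier multiplier. Real multiplication operators are also self-adjoint. It follows that $(M_aHM_b)^*=M_bHM_a$ and $M_c^*=M_c$.

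\emph{Adjoint of $\tilde{\boldsymbol i_1}'(\phi)$.} Applying these rules, the two terms $\cos\phi\,H(\cos\phi\,\cdot)$ and $\sin\phi\,H(\sin\phi\,\cdot)$ are each self-adjoint, and the remaining term is multiplication by a real function. So $\tilde{\boldsymbol i_1}'(\phi)$ is self-adjoint. Recombining the trigonometric products gives the stated formula with $\psi$ replaced by $\varphi$.

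\emph{Adjoint of $\tilde{\boldsymbol i_2}'(\phi)$.} Applying the same rules gives
\[
\tilde{\boldsymbol i_2}'(\phi)^*\varphi=-\chi_\Omega\big[\sin\phi\, H\big(\chi_D H(\chi_\Omega\cos\phi)\varphi\big)-\cos\phi\, H\big(\chi_D H(\chi_\Omega\sin\phi)\varphi\big)\big].
\]
I then rewrite this as an integral and recombine the products into $\sin[\phi(x,y)-\phi(x'',y'')]$. The outer Hilbert transform acts on variables $(x',y')\in D$, and the inner one links $(x',y')$ to $(x'',y'')\in\Omega$. After the renaming $x\to x'$, $x'\to x$ this yields exactly the displayed kernel $1/((x'-x)(y'-y)(x''-x')(y''-y'))$, together with the factor $-1/\pi^4$ and the support $\chi_\Omega(x,y)$.

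The main obstacle is justifying these manipulations rigorously. A naive Fubini argument on the principal value integrals fails: exchanging the order of two iterated singular integrals produces Poincaré–Bertrand correction terms. The operator reformulation is what sidesteps this. All identities hold in $L^2$ because $H$ is bounded and unitary on $L^2$. The multipliers lie in $L^\infty$, since $\phi\in H^s$ with $s>1$ is continuous and $\cos\phi,\sin\phi$ are bounded. Products such as $H(\chi_\Omega\cos\phi)\,H(\cdot)$ are controlled because $\chi_\Omega\cos\phi\in L^2\cap L^\infty$. The last point needs care: $H(\chi_\Omega)$ is not bounded. If necessary, I would interpret the $\boldsymbol i_2'$ terms first on a dense class, such as smooth $\varphi$ with compact support away from $\partial\Omega$, and then extend by continuity, as in the well-definedness result of Proposition~\ref{op_well_def}.
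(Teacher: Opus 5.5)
Your proposal is correct and reaches exactly the stated formulas, but it takes a different route from the paper.

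\textbf{The paper's route.} The paper works at the kernel level. It splits $\tilde{\boldsymbol i}'(\phi)$ into the four integral operators $\boldsymbol i_{11},\boldsymbol i_{12},\boldsymbol i_{21},\boldsymbol i_{22}$ and writes each pairing with $\varphi$ as a multiple integral. It then exchanges the order of integration and renames variables, computing $\boldsymbol i_{21}^*$ and $\boldsymbol i_{22}^*$ separately; only at the end do they turn out to be negatives of each other. Moving principal-value limits past the outer integrals is justified by the Fubini-type theorem of the appendix, and $E_s^*$ is quoted from Teschke.

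\textbf{Your route.} You factor both derivatives into terms $M_aHM_b$ plus one multiplication operator. You symmetrize $\boldsymbol i_2'(\phi)$ in advance, which is the identity $\boldsymbol i_{22}=-\boldsymbol i_{21}$. After that you use only $H^*=H$ and the symmetry of real multipliers, and you derive $E_s^*$ by Plancherel. The kernel of $\tilde{\boldsymbol i_2}'(\phi)^*$ is then read off directly from $\sin\phi\,H\bigl(\chi_D H(\chi_\Omega\cos\phi)\varphi\bigr)-\cos\phi\,H\bigl(\chi_D H(\chi_\Omega\sin\phi)\varphi\bigr)$. This includes the correct order of the iterated principal values: inner over $(x'',y'')\in\Omega$, singular at $(x',y')$; outer over $(x',y')\in D$, singular at $(x,y)$. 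No renaming is needed, contrary to your last sentence there.

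\textbf{What each buys.} In your route the only exchange of singular integrals is the single duality $\int (Hf)g=\int f\,(Hg)$. So you never place a variable carrying two singularities innermost. That is the Poincar\'e--Bertrand situation; the paper's $\boldsymbol i_{22}$ computation passes through such an intermediate form and relies on its appendix theorem to reorder it. Your route also shows structurally why self-adjointness holds. The kernel $1/((x'-x)(y'-y))$ is a product of two odd factors, hence symmetric under $(x,y)\leftrightarrow(x',y')$, unlike the odd kernel $1/(x'-x)$ behind the PWFS slopes. The paper's route, in turn, stays parallel to the kernel formulas that are discretized in practice.

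\textbf{A correction to your rigor discussion.} The composite multipliers are not bounded in general. For a square pupil $\Omega=(0,1)^2$ one has $H(\chi_\Omega)(x,y)=\pi^{-2}\log\left|\frac{1-x}{x}\right|\log\left|\frac{1-y}{y}\right|$. So already at $\phi=0$ the coefficient $c=\chi_\Omega\bigl(\cos\phi\,H(\chi_\Omega\cos\phi)+\sin\phi\,H(\chi_\Omega\sin\phi)\bigr)$ of $\psi$ in $\tilde{\boldsymbol i_1}'(\phi)$ is unbounded, and $M_c$ is not bounded on $L^2$. The fact that $\chi_\Omega\cos\phi\in L^2\cap L^\infty$ does not help, since $H$ does not preserve $L^\infty$. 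Hence $\tilde{\boldsymbol i}'(\phi)$ is in general not an $L^2$-bounded operator, and extension by continuity in $L^2$ is unavailable.

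The clean fix is the $L^p$-boundedness of $H$ for $1<p<\infty$:
\begin{itemize}
\item All these coefficients lie in every $L^p$ with $p<\infty$.
\item For $\psi\in H^s\subset L^\infty$ and $\varphi\in L^2$, your manipulations are then justified by $\int (Hf)g=\int f\,(Hg)$ with $f\in L^{p'}$, $g\in L^{p}$.
\item They yield $\tilde{\boldsymbol i}'(\phi)^*\varphi\in L^q(\Omega)$ for every $q<2$, hence an element of $H^{-s}$ because $H^s\hookrightarrow L^\infty$.
\item The Fourier formula for $E_s^*$ still maps $H^{-s}$ into $H^s$.
\end{itemize}
The paper also tacitly treats $\tilde{\boldsymbol i}'(\phi)$ as an operator on $L^2$. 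So this is an imprecision in the statement rather than a gap specific to your proof.
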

\begin{proof}
Let us define the operator $\tilde{\boldsymbol i}'(\phi):L^2\left(\R^2\right)\rightarrow L^2\left(\R^2\right)$ by
\begin{equation*}
        \boldsymbol i'(\phi) =: \tilde{\boldsymbol i}'(E_s(\phi)) \, ,
    \end{equation*}
where $ \boldsymbol i'(\phi)$ is the Fr\'{e}chet derivative from Theorem~\ref{3.9_gateaux}.

The adjoint operator of the Fr\'{e}chet derivative $\boldsymbol i'(\phi)^*:L^2\left(\R^2\right)\rightarrow H^s\left(\R^2\right)$ can then be calculated by 
    \begin{equation*}
        \boldsymbol i'(\phi)^* = E_s^*\left(\tilde{\boldsymbol i}'(\phi)^*\right)\, ,
    \end{equation*}
where the adjoint of the embedding operator $E_s$ follows from~\cite{Teschke2004}.

It remains to evaluate the adjoint operator $\tilde{\boldsymbol i}'(\phi)^*:L^2\left(\R^2\right)\rightarrow L^2\left(\R^2\right)$. Therefore, we divide the Fr\'{e}chet derivative $\tilde{\boldsymbol i}'(\phi)$ into four parts $\boldsymbol i_{11}\left(\phi\right)$, $\boldsymbol i_{12}\left(\phi\right)$, $\boldsymbol i_{21}\left(\phi\right)$, $\boldsymbol i_{22}\left(\phi\right) \in \mathcal{L}\left(L^2,L^2\right)$ by 
\begin{align}
\tilde{\boldsymbol i}'(\phi) &= \tilde{\boldsymbol i_1}'(\phi) +\tilde{\boldsymbol i_2}'(\phi)\, , \nonumber \\
\tilde{\boldsymbol i_1}'(\phi) &=\boldsymbol i_{11}(\phi) - \boldsymbol i_{12}(\phi)\, ,\label{split2} \\
\tilde{\boldsymbol i_2}'(\phi) &=\boldsymbol i_{21}(\phi) - \boldsymbol i_{22}(\phi) \nonumber
\end{align}
with
\begin{align*}
\left(\boldsymbol i_{11}(\phi)\ \psi\right)\left(x,y\right)
&:=\chi_{\Omega}\left(x,y\right)\dfrac{1}{\pi^2} \, p.v.\int_\Omega{\dfrac{\cos{\left[\phi(x',y')-\phi(x,y)\right]}\psi(x',y')}{(x'-x)(y'-y)}\, dx' \ dy'}\, , \\
\left(\boldsymbol i_{12}(\phi)\ \psi\right)\left(x,y\right)
&:=\chi_{\Omega}\left(x,y\right)\dfrac{1}{\pi^2} \, p.v.\int_\Omega{\dfrac{\cos{\left[\phi(x',y')-\phi(x,y)\right]}\psi(x,y)}{(x'-x)(y'-y)}\, dx' \ dy'}\, , \\
\left(\boldsymbol i_{21}(\phi)\ \psi\right)\left(x,y\right)
&:=-\chi_{D}\left(x,y\right)\dfrac{1}{2\pi^4} \, p.v. \int_\Omega\int_\Omega{\dfrac{\sin{\left[\phi(x',y')-\phi(x'',y'')\right]}\psi(x',y')}{(x'-x)(y'-y)(x''-x)(y''-y)}\, dx'' \ dy''\ dy' \ dx'}\, , \\
\left(\boldsymbol i_{22}(\phi)\ \psi\right)\left(x,y\right)
&:=-\chi_{D}\left(x,y\right)\dfrac{1}{2\pi^4} \, p.v. \int_\Omega\int_\Omega{\dfrac{\sin{\left[\phi(x',y')-\phi(x'',y'')\right]}\psi(x'',y'')}{(x'-x)(y'-y)(x''-x)(y''-y)}\, dx'' \ dy''\ dy' \ dx'}\, .
\end{align*}

Please note that for the evaluation of the adjoint operators below, it is necessary to exchange integrals and the limit in the p.v. meaning. The corresponding theorem and proof can be found in the appendix (see Theorem~\ref{theorem: Fubini L^1(Omega^3)} and below).

For $\psi, \varphi \in L^2\left(\R^2\right)$ we consider
\begin{align*}
\langle \boldsymbol i_{11}(\phi)\ \psi,\varphi\rangle_{ L^2\left(\R^2\right)} &= 
 \int_{\R^2}\chi_{\Omega}\left(x,y\right)\\
 &\cdot \left[\dfrac{1}{\pi^2} \, p.v. \int_{\R^2}\chi_{\Omega}\left(x',y'\right){\dfrac{\cos{\left[\phi(x',y')-\phi(x,y)\right]}\psi(x',y')}{(x'-x)(y'-y)}\ dx' \ dy'}\right] \varphi\left(x,y\right) \ dx \ dy \\
&= 
\int_{\R^2}\chi_{\Omega}\left(x',y'\right)\psi\left(x',y'\right)\\
&\cdot \dfrac{1}{\pi^2}  \, p.v.\int_{\R^2}\chi_{\Omega}\left(x,y\right)\dfrac{\cos{\left[\phi(x',y')-\phi(x,y)\right]}\varphi(x,y)}{(x'-x)(y'-y) } \ dx \   dy \ dx' \ dy' \\
&= 
\int_{\R^2}\chi_{\Omega}\left(x,y\right) \psi\left(x,y\right)\\
&\cdot \dfrac{1}{\pi^2} \, p.v.\int_{\R^2}\chi_{\Omega}\left(x',y'\right)\dfrac{\cos{\left[\phi(x,y)-\phi(x',y')\right]}\varphi(x',y')}{(x-x')(y-y')}\, dx' \ dy' \ dx \ dy \\
&=
 \langle \psi,\boldsymbol i_{11}(\phi)^*\varphi\rangle_{ L^2\left(\R^2\right)} 
\end{align*}
with (as cosine is even) $$\left(\boldsymbol i_{11}(\phi)^*\varphi\right)(x,y) = \chi_{\Omega}\left(x,y\right)\dfrac{1}{\pi^2} \, p.v.\int_\Omega{\dfrac{\cos{\left[\phi(x',y')-\phi(x,y)\right]}\varphi(x',y')}{(x'-x)(y'-y)}\, dx' \ dy'}\, ,$$ i.e., $\boldsymbol i_{11}(\phi)$ is self-adjoint in $L^2(\R^2)$.

Moreover,
\begin{align*}
\langle \boldsymbol i_{12}(\phi)\ \psi,\varphi\rangle_{ L^2\left(\R^2\right)} 
&=
 \int_{\R^2}\chi_{\Omega}\left(x,y\right)\\
 &\cdot \left[\dfrac{1}{\pi^2}  \, p.v.\int_{\R^2}\chi_{\Omega}\left(x',y'\right)\dfrac{\cos{\left[\phi(x',y')-\phi(x,y)\right]}\psi(x,y)}{(x'-x)(y'-y)}\, dx' \ dy' \right] \varphi\left(x,y\right) \ dx \ dy \\
 &= 
\int_{\R^2}\chi_{\Omega}\left(x,y\right) \psi\left(x,y\right)\\
&\cdot \dfrac{1}{\pi^2} \, p.v.\int_{\R^2}\chi_{\Omega}\left(x',y'\right)\dfrac{\cos{\left[\phi(x',y')-\phi(x,y)\right]}\varphi(x,y)}{(x'-x)(y'-y)}\, dx' \  dy'  \ dx \ dy \\
&=  \langle \psi,\boldsymbol i_{12}(\phi)^*\varphi\rangle_{ L^2\left(\R^2\right)}\, , 
\end{align*}
which results in $$\left(\boldsymbol i_{12}(\phi)^*\varphi\right)(x,y) =\chi_{\Omega}\left(x,y\right) \dfrac{1}{\pi^2} \, p.v.\int_\Omega{\dfrac{\cos{\left[\phi(x',y')-\phi(x,y)\right]}\varphi(x,y)}{(x'-x)(y'-y)}\, dx' \ dy' }\, ,$$ i.e., $\boldsymbol i_{12}(\phi)$ is self-adjoint in $L^2(\R^2)$.

Similarly,
\begin{align*}
\langle \boldsymbol i_{21}(\phi)\ \psi,\varphi\rangle_{ L^2\left(\R^2\right)} &= 
- \int_{\R^2}\chi_D(x,y)\left[\dfrac{1}{2\pi^4} \, p.v. \int_{\R^2} \chi_\Omega(x',y')\int_{\R^2}\chi_\Omega(x'',y'')\right. \\
&\left. \cdot {\dfrac{\sin{\left[\phi(x',y')-\phi(x'',y'')\right]}\psi(x',y')}{(x'-x)(y'-y)(x''-x)(y''-y)} \, dx'' \ dy'' \ dy' \ dx'}\right]  \varphi\left(x,y\right) \, dy \ dx \\
&= 
-\int_{\R^2}\chi_\Omega(x',y')\psi\left(x',y'\right)\dfrac{1}{2\pi^4} \, p.v. \int_{\R^2} \chi_D(x,y)\int_{\R^2} \chi_\Omega(x'',y'') \\
&\cdot {\dfrac{\sin{\left[\phi(x',y')-\phi(x'',y'')\right]}\varphi(x,y)}{(x'-x)(y'-y)(x''-x)(y''-y) } \, dx'' \ dy'' \ dy \ }  dx \ dy' \ dx' \\
&= 
-\int_{\R^2}\chi_\Omega(x,y) \psi\left(x,y\right) \dfrac{1}{2\pi^4}\, p.v. \int_{\R^2}\chi_D(x',y')\int_{\R^2}\chi_\Omega(x'',y'')\\
&\cdot {\dfrac{\sin{\left[\phi(x,y)-\phi(x'',y'')\right]}\varphi(x',y')}{(x-x')(y-y')(x''-x')(y''-y')} \, dx'' \ dy'' \ dy'} \ dx' \ dy \ dx \\
&=
 \langle \psi,\boldsymbol i_{21}(\phi)^*\varphi\rangle_{ L^2\left(\R^2\right)} 
\end{align*}

with $$\left(\boldsymbol i_{21}(\phi)^*\varphi\right)(x,y) = -\chi_\Omega(x,y)\dfrac{1}{2\pi^4} \, p.v.\int_D\int_\Omega{\dfrac{\sin{\left[\phi(x,y)-\phi(x'',y'')\right]}\varphi(x',y')}{(x'-x)(y'-y)(x''-x')(y''-y')} \, dx'' \ dy'' \ dy' \ dx'}$$

and
\begin{align*}
\langle \boldsymbol i_{22}(\phi)\ \psi,\varphi\rangle_{ L^2\left(\R^2\right)} 
&=
- \int_{\R^2}\chi_D(x,y)\left[\dfrac{1}{2\pi^4}  \, p.v.\int_{\R^2}\chi_\Omega(x',y')\int_{\R^2}\chi_\Omega(x'',y'') \right. \\
&\left. \cdot {\dfrac{\sin{\left[\phi(x',y')-\phi(x'',y'')\right]}\psi(x'',y'')}{(x'-x)(y'-y)(x''-x)(y''-y)} \ dx'' \ dy'' \ dy' \ dx' }\right] \varphi\left(x,y\right) \, dx \ dy \\
 &= 
-\int_{\R^2} \chi_\Omega(x'',y'')\psi\left(x'',y''\right) \dfrac{1}{2\pi^4} \, p.v.\int_{\R^2}\chi_\Omega(x',y')\int_{\R^2}\chi_D(x,y) \\
&\cdot {\dfrac{\sin{\left[\phi(x',y')-\phi(x'',y'')\right]}\varphi(x,y)}{(x'-x)(y'-y)(x''-x)(y''-y)} \, dx \ dy \ dy' \  dx' } \ dx'' \ dy'' \\
 &= 
-\int_{\R^2}\chi_\Omega(x,y) \psi\left(x,y\right) \dfrac{1}{2\pi^4} \, p.v.\int_{\R^2}\chi_\Omega(x',y')\int_{\R^2}\chi_D(x'',y'') \\
&\cdot {\dfrac{\sin{\left[\phi(x',y')-\phi(x,y)\right]}\varphi(x'',y'')}{(x'-x'')(y'-y'')(x-x'')(y-y'')} \, dx'' \ dy'' \ dy' \  dx' } \ dx \ dy \\
&=  \langle \psi,\boldsymbol i_{22}(\phi)^*\varphi\rangle_{ L^2\left(\R^2\right)}\, , 
\end{align*}
which by Fubini-Tonelli (understood in the p.v. sense) results in 
\begin{equation*}
\begin{split}
\left(\boldsymbol i_{22}(\phi)^*\varphi\right)(x,y) &= - \chi_\Omega(x,y)\dfrac{1}{2\pi^4} \, p.v.\int_D\int_\Omega{\dfrac{\sin{\left[\phi(x',y')-\phi(x,y)\right]}\varphi(x'',y'')}{(x''-x)(y''-y)(x''-x')(y''-y')} \, dx' \ dy' \ dy'' \ dx'' } \\
&= -\chi_\Omega(x,y)\dfrac{1}{2\pi^4}\, p.v. \int_D\int_\Omega{\dfrac{\sin{\left[\phi(x'',y'')-\phi(x,y)\right]}\varphi(x',y')}{(x'-x)(y'-y)(x'-x'')(y'-y'')} \, dx'' \ dy'' \ dy' \ dx' } \\
&= \chi_\Omega(x,y)\dfrac{1}{2\pi^4} \, p.v.\int_D\int_\Omega{\dfrac{\sin{\left[\phi(x,y)-\phi(x'',y'')\right]}\varphi(x',y')}{(x'-x)(y'-y)(x''-x')(y''-y')} \, dx'' \ dy'' \ dy' \ dx' } \, . 
\end{split}
\end{equation*}
Hence, the $L^2$-adjoint of the Fr\'{e}chet derivative $\tilde{\boldsymbol i}'(\phi)^*$ is given by
\begin{align*}
\left(\tilde{\boldsymbol i}'(\phi)^*\varphi\right)\left(x,y\right) &=\left(\boldsymbol i_{11}(\phi)^*\varphi\right)\left(x,y\right)-\left(\boldsymbol i_{12}(\phi)^*\varphi\right)\left(x,y\right)+\left(\boldsymbol i_{21}(\phi)^*\varphi\right)\left(x,y\right)-\left(\boldsymbol i_{22}(\phi)^*\varphi\right)\left(x,y\right) \\
&= \chi_{\Omega}\left(x,y\right)\dfrac{1}{\pi^2} \, p.v.\int_\Omega{\dfrac{\cos{\left[\phi(x',y')-\phi(x,y)\right]}\left[\varphi(x',y')-\varphi(x,y)\right]}{(x'-x)(y'-y)}\, dx'  \ dy'} \\
&-\chi_\Omega(x,y)\dfrac{1}{2\pi^4} \, p.v.\int_D\int_\Omega{\dfrac{\sin{\left[\phi(x,y)-\phi(x'',y'')\right]}\varphi(x',y')}{(x'-x)(y'-y)(x''-x')(y''-y')} \, dx'' \  dy''  \ dy'  \ dx'} \\
&-  \chi_\Omega(x,y)\dfrac{1}{2\pi^4}\, p.v. \int_D\int_\Omega{\dfrac{\sin{\left[\phi(x,y)-\phi(x'',y'')\right]}\varphi(x',y')}{(x'-x)(y'-y)(x''-x')(y''-y')} \, dx''  \ dy''  \ dy'  \ dx' } \\
&= \left(\tilde{\boldsymbol i_1}'(\phi)\ \varphi\right) \left(x,y\right)\\
&-\chi_\Omega(x,y)\dfrac{1}{\pi^4}\, p.v. \int_D\int_\Omega{\dfrac{\sin{\left[\phi(x,y)-\phi(x'',y'')\right]}\varphi(x',y')}{(x'-x)(y'-y)(x''-x')(y''-y')} \, dx'' \  dy''  \ dy' \  dx'}
\end{align*}
proving the claim of the proposition.
\end{proof}

As a consequence of the above proof the following remark is obtained.

\begin{remark}
    The operator $\tilde{\boldsymbol i_1}'(\phi):L^2\left(\R^2\right)\rightarrow L^2\left(\R^2\right)$ defined in eq.~\eqref{split2} is self-adjoint. Moreover, the linearization $i^{lin}$ defined in eq.~\eqref{liniQuad} is self-adjoint in $L^2(\R^2)$. 
\end{remark}

The self-adjointness of the above $L^2$-Fr\'echet derivative and linear intensity operators is, to our knowledge, a unique characteristic of the iQuad WFS and has not been found in other wavefront sensing techniques so far. This property makes the application of iterative reconstruction designs more straightforward.


\section{Double iQuad WFS}\label{double_iQuad}

Some parts of the wavefront have been found to be poorly seen by the iQuad WFS~\cite{FaHuShaRaLAM19_AO4ELTproc}. Affected are phases that contain only pure $x$- or $y$-frequencies, in other words, frequencies that lie on the edge of the Cartesian tessellation. For example, the iQuad sensor has difficulty coding the vertical astigmatism $Z_2^2$. The physical reason for this very low sensitivity is related to symmetry. 

Geometrical considerations suggest that a two-path optical system that simultaneously uses two iQuad sensors with two different orientations can circumvent the problem of poorly-seen modes.

The concept follows the double roof WFSs~\cite{Veri04}. By using a beam splitter, the field is divided into two fields which are Fourier-filtered by two different masks, here the iQuad mask and its conjugate the $-$iQuad mask, see Fig.~\ref{doubleiQuad}.

\begin{figure}[htbp]
\centerline{\includegraphics[width=6.5cm]{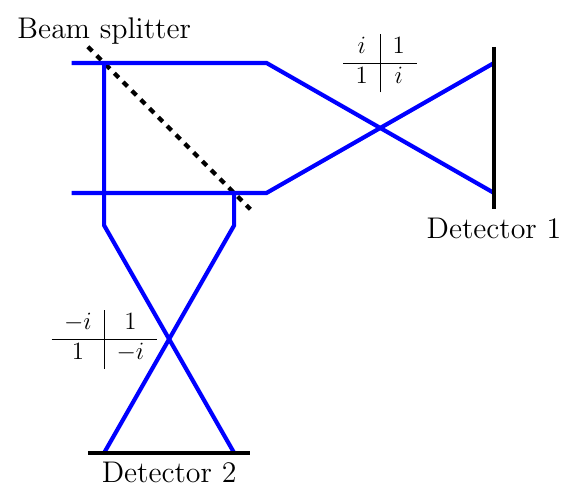}}
\caption{Optical setup of the double iQuad WFS.}
\label{doubleiQuad}
\end{figure}

\noindent The optical propagators of the two paths are
\begin{equation*}
A^+\stackrel{\eqref{Aplus}}{=}e^{\frac{i\pi}{4}}\left[\frac{\mathcal{I}+i H}{\sqrt{2}}\right]\, , ~~~~~~~~A^-=e^{-\frac{i\pi}{4}}\left[\frac{\mathcal{I}-i H}{\sqrt{2}}\right]\, ,
\end{equation*}
where $A^+$ corresponds to $\delta=\frac{\lambda}{4}$ and $A^-$ to $\delta=-\frac{\lambda}{4}$ in eq.~\eqref{fqpm_general}.

The incoming field amplitude is divided by a factor of $\sqrt{2}$ by the beam splitter. This results in two intensities
\begin{align*}
I^+(\phi)(x,y) &= \chi_D\left(x,y\right)\left|A^+\left(\frac{\chi_\Omega e^{-i\phi}}{\sqrt{2}}\right)(x,y)\right|^2\, , \\
I^-(\phi)(x,y) &= \chi_D\left(x,y\right)\left|A^-\left(\frac{\chi_\Omega e^{-i\phi}}{\sqrt{2}}\right)(x,y)\right|^2\, .
\end{align*}
Similarly to~\eqref{i1}, we obtain
\begin{align*}
    I^+(\phi)(x,y) 
    &= \chi_D\left(x,y\right)\frac{1}{2}\left[Im\left(\left(\chi_\Omega e^{-i\phi}\right)H\left(\chi_\Omega e^{i\phi}\right)\right)+\frac{\left|H\left(\chi_\Omega e^{i\phi}\right)\right|^2+\chi_\Omega^2}{2}\right](x,y)\, , \\
I^-(\phi)(x,y) 
    &= \chi_D\left(x,y\right)\frac{1}{2}\left[-Im\left(\left(\chi_\Omega e^{-i\phi}\right)H\left(\chi_\Omega e^{i\phi}\right)\right)+\frac{\left|H\left(\chi_\Omega e^{i\phi}\right)\right|^2+\chi_\Omega^2}{2}\right](x,y)\, .
\end{align*}

The double iQuad WFS output $dI$ is defined as the difference between these two intensities
\begin{align*}
dI(\phi)(x,y)&=I^+(\phi)(x,y)-I^-(\phi)(x,y) \\
&=\chi_D\left(x,y\right)\left[Im\left(\left(\chi_\Omega e^{-i\phi}\right)H\left(\chi_\Omega e^{i\phi}\right)\right)\right](x,y)\\
&\stackrel{\eqref{def_i11}}{=}\left(\boldsymbol i_1\phi\right)(x,y) \, .
\end{align*}
This results in
\begin{equation}\label{form_double_iquad}
dI\big(\phi\big)({x,y})=\chi_{\Omega}(x,y)\dfrac{1}{\pi^2} \, p.v.\int_\Omega{\dfrac{\sin{\left[\phi(x',y')-\phi(x,y)\right] }}{(x'-x)(y'-y)}\, dx'\ dy'}\, .
\end{equation}
The intensities of the iQuad WFS, the $-$iQuad WFS and the double iQuad WFS for an example incoming wavefront are visualized in Fig.~\ref{double_iquad}.

\begin{figure}[ht]
    \centering
    \includegraphics[width=0.24\linewidth]{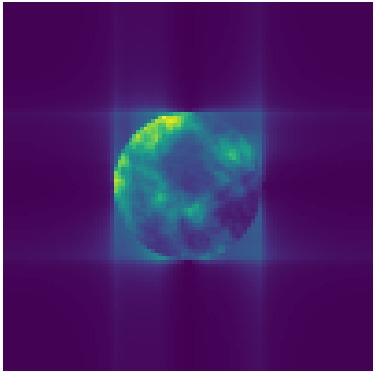}
    \includegraphics[width=0.24\linewidth]{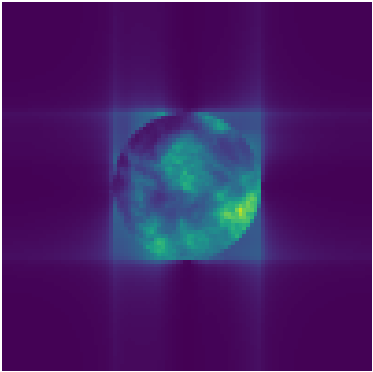}
    \includegraphics[width=0.24\linewidth]{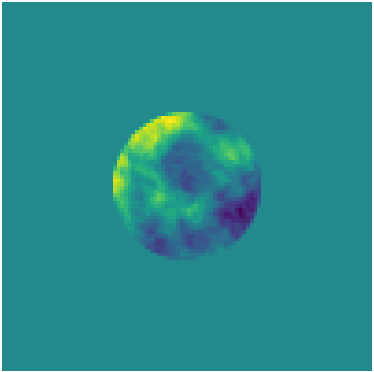}
    \caption{iQuad WFS intensity $I^+(\phi)$, $-$iQuad WFS intensity $I^-(\phi)$ and double iQuad WFS intensity $dI(\phi)$ (from left to right) for the example wavefront visualized in Fig.~\ref{ref_intensities}, left.}
    \label{double_iquad}
\end{figure}

Compared to the standard iQuad WFS, the underlying mathematical model of the two-path optical design is simplified as all intensity patterns related to $\boldsymbol i_2$ in Definition~\ref{def_i} vanish. Thus, the new optical configuration allows to cancel even intensity components of the iQuad WFS model. In particular, the first non-linear term is the cubic one and not, as usually, the quadratic one. This fact implies an improvement of the sensor linearity. 
Due to the identity of the intensity operators $dI = \boldsymbol i_1$, all derivations about $\boldsymbol i_1$ in Section~\ref{mathmodels} apply directly to the double iQuad WFS. The linearization of the double iQuad WFS operator $dI$ is given by $\boldsymbol i^{lin}$ defined in eq.~\eqref{liniQuad} and corresponds to the linear intensity of the standard iQuad WFS. Moreover, the linear double iQuad operator $\boldsymbol i^{lin}$ is self-adjoint in $L_2\left(\R^2\right)$ as is the corresponding Fr\'echet derivative $\widetilde{dI}'(\phi):=\tilde{\boldsymbol i_1}'(\phi)$ from eq.~\eqref{split2}. This property is particularly advantageous for certain iterative methods, such as the linear Landweber iteration~\cite{HuRaSha19_2}.
Eq.~\eqref{form_double_iquad} reveals that all light on the detector is concentrated within the pupil geometry $\Omega$ for the double iQuad WFS. Pixels outside of $\Omega$ have no relevant intensity.


\section{Similarities of the (double) iQuad WFS to the PWFS}\label{sim_pwfs}

The PWFS is the most well-known Fourier-type WFS~\cite{Raga96}. It is included in the design of many current and upcoming AO systems, e.g., in first-light instruments of the ELT. The mathematical model of the PWFS corresponds to the general Fourier-type WFS model presented in eq.~\eqref{eq_fouriertype} but with a 4-sided pyramidal prism as OTF in the focal plane.

The mathematical model of the double iQuad WFS and the slope definition of the PWFS~\cite{Veri04} are closely related. The latter consists of two specific combinations of the four pupil images and allows to numerically compress the PWFS signal into two measurements. Assuming a reflective PWFS working in its linear regime, the slope maps $\left[s_x,s_y\right]$ are approximated by~\cite{HuRaSha19_1}
\begin{align}
s_x\big(\phi\big)(x,y) &\approx -\chi_\Omega(x,y) \frac{1}{2\pi}\int_{\Omega_y} \frac{\phi(x',y)-\phi(x,y)}{x'-x}\, dx'\, , \label{Sx} \\ 
s_y\big(\phi\big)(x,y) &\approx \chi_\Omega(x,y) \frac{1}{2\pi}\int_{\Omega_x} \frac{\phi(x,y')-\phi(x,y)}{y'-y}\, dy'\, , \label{Sy}
\end{align}
where $\Omega_x = \{(\tilde{x},\tilde{y}) \in \Omega, \tilde{x}=x\}$ represents a single line of $\Omega$ for fixed $x$ and $\Omega_y$ is defined respectively.  

The above quantities can be viewed as integrated difference quotients along the $x$- and $y$-directions, which is why $\left[s_x,s_y\right]$ are often interpreted as phase derivatives (“slopes”). However, they are not true phase derivatives. The 1d Hilbert transform along $x$ (and $y$ respectively) is the dominant operation in the measurements $s_x$ and $s_y$.

We now compare eq. \eqref{Sx}-\eqref{Sy} to the linear intensity of the (double) iQuad WFS given in eq.~\eqref{liniQuad} as
\begin{equation*}
\left(\boldsymbol i^{lin}\phi\right)(x,y)= \chi_{\Omega}(x,y)\dfrac{1}{\pi^2} \, p.v. \int_\Omega{\dfrac{\phi(x',y')-\phi(x,y)}{(x'-x)(y'-y)}\, dx' \ dy'}\, .
\end{equation*}

The iQuad WFS signal appears to be a condensed version of both PWFS slope maps. Specifically, the iQuad WFS optically encodes the phase's $x$- and $y$-Hilbert transforms  within a single pupil image. Although a focal-plane pyramidal prism differs from the iQuad’s OTF, their connection stems from a shared Cartesian tessellation of the focal plane. Regardless of how the phase information is processed — whether through optical interference via the differential piston in the iQuad WFS or by splitting spatial frequencies with prisms and numerically processing them in the PWFS - the underlying mathematical operators are fundamentally determined by the way in which the focal plane is subdivided.

The similarity between the iQuad WFS and the slope maps of the PWFS offers several advantages. The wealth of existing model-based reconstruction algorithms developed for the PWFS can be readily adapted to create efficient reconstructors for the iQuad WFS~\cite{ShaHuRa20}. 

The iQuad WFS can also be viewed as a theoretical extension of the PWFS that is both optically and mathematically simpler. Its mask is simpler and avoids design assumptions such as an optimal pyramid angle. In the (double) iQuad configuration, the Fréchet derivative and the linear-intensity operator are self-adjoint, unlike in the PWFS, further clarifying its analysis and implementation.

Moreover, the iQuad WFS is more efficient in terms of detector size than the PWFS. This can be analyzed from two perspectives.

Firstly, we use geometrical optics, assuming that phase information is confined to pupil images only. The iQuad encodes both $x$- and $y$-information in a single pupil image, whereas the PWFS produces four pupil images. Thus, the iQuad needs about four times fewer pixels for equivalent information, reducing readout and processing.

Secondly, we examine their linear intensity. For optimal AO correction using conventional interaction matrix-based methods~\cite{Ellerbroek02}, it is essential to maximize the use of sensor pixels within the linear intensity region. Eq.~\eqref{liniQuad} shows that the iQuad WFS's linear intensity region matches exactly the geometrical pupil image, whereas the PWFS's linear information extends beyond the four pupil images, favoring use of the full detector. As a result, the PWFS typically requires approximately four times the number of pixels, confirming the superior detector efficiency of the iQuad WFS.


\section{The modulated iQuad WFS}\label{mod_iquad}

Similarly to the PWFS, it is possible to extend the iQuad WFS concept with modulation~\cite{Raga96}. Following the notation in~\cite{Fauv16}, we use a weighting function $w$ to characterize the shape of the modulation. It is a 2d function that codes the time spent for each modulation cycle. The function $w$ is thus real, positive, normalized with respect to the $L^1$-norm and symmetric around its center. The latter assumption is physically supported as the iQuad mask is also symmetric around its center and the incoming wavefront has an isotropic structure.

Mathematically, it is observed that modulation implies the change~\cite{Fauv16}
\begin{equation*}
\frac{1}{\pi^2 x y} \rightarrow \frac{\hat{w}(x,y)}{\pi^2 x y}\, ,
\end{equation*}
where 
$\hat{w}$ represents the 2d Fourier transform of the modulation weighting function $w$.
Thus, for the modulated iQuad WFS the corresponding meta-intensity $m^{mod}$ reads as

$$m^{mod}(x,y)\boldsymbol = \boldsymbol i^{mod} = \boldsymbol i_1^{mod}+\boldsymbol i_2^{mod}$$
with
\begin{align*}
\left(\boldsymbol i_1^{mod}\phi\right)(x,y)&:=\chi_{\Omega}(x,y)\dfrac{1}{\pi^2}\, p.v. \int_\Omega{\dfrac{\sin{\left[\phi(x',y')-\phi(x,y)\right] }}{(x'-x)(y'-y)}\, \hat{w}\left(x'-x,y'-y\right)\, dx'\ dy'}\, , \\
\left(\boldsymbol i_2^{mod}\phi\right)(x,y)&:= \chi_D\left(x,y\right)\dfrac{1}{2\pi^4}\, p.v.  \int_\Omega\int_\Omega\dfrac{\cos{\left[\phi(x',y')-\phi(x'',y'')\right]-1 }}{(x'-x)(y'-y)(x''-x)(y''-y)}\\
&\cdot \hat{w}\left(x'-x'',y'-y''\right) \, dx''\ dy''\  dy'   \ dx' \, .
\end{align*}

The linearization around $0$ of the operator $i^{mod}$ is given by
\begin{equation*}
\left(\boldsymbol i^{mod,lin}\phi\right)(x,y):= \chi_{\Omega}(x,y)\dfrac{1}{\pi^2} \, p.v. \int_\Omega{\dfrac{\phi(x',y')-\phi(x,y)}{(x'-x)(y'-y)} \hat{w}\left(x'-x,y'-y\right) \, dx'  dy'} \, .
\end{equation*}

As stated in article~\cite{Veri04}, modulation can increase the linearity of the WFS at the cost of sensitivity.

\section{Conclusion \& outlook}

This work presents the mathematical foundation for a new Fourier-type WFS called the iQuad WFS, originally derived from the FQPM coronagraph. Additionally, a two-path variation, the double iQuad WFS, has been proposed. This concept utilized two iQuad masks with different orientations. The nonlinear integral operators representing both the iQuad and double iQuad WFS were analyzed and their Fr\'{e}chet derivatives were calculated. Linearizations of these operators along with their corresponding adjoint operators were presented. Furthermore, it was demonstrated how the mathematical model of the double iQuad WFS establishes a connection to the well-known PWFS due to the similarity between the iQuad WFS output and the PWFS slope maps. Finally, the concept of modulation was introduced in the context of iQuad wavefront sensing.

This theoretical analysis serves as a foundational step toward applying nonlinear and linear model-based reconstruction algorithms for iQuad wavefront sensing in AO, which will be detailed in a follow-up paper. Future work will focus on laboratory demonstrations of the iQuad WFS, including extensive comparisons with existing WFSs such as the PWFS.

It was mentioned that there are some phase components that might be poorly seen by the iQuad WFS. This problem is addressed by the double iQuad WFS design. However, an important question is how likely it is, in real-world scenarios, to encounter perfectly symmetric spatial frequency components of a wavefront being poorly detected by the iQuad WFS. While such modes may pose challenges in simulations, they might rarely, if ever, occur in practical applications, such as on-sky observations or medical imaging, simply because these perfectly symmetric wavefront components might be typically absent. Therefore, further laboratory investigation is needed to determine whether a single-path iQuad WFS performs adequately in practice, or whether the double iQuad WFS is a more reliable choice.

\paragraph{Funding.} 
This research was funded in part by the Austrian Science Fund (FWF) SFB 10.55776/F68 "Tomography Across the Scales", Project F6805-N36 (Tomography in Astronomy). For open access purposes, the authors have applied a CC BY public copyright license to any
author-accepted manuscript version arising from this submission. Alfred Miksch is partially supported by the State of Upper Austria. This work also benefited from the support of the French National Research Agency (ANR) with the Programme Investissement Avenir F-CELT (ANR-21-ESRE-0008), the ANR-DGA-AID ASTRID program (ANR-25-ASTR-0015), the Action Sp\'ecifique Haute R\'esolution Angulaire (ASHRA) of CNRS/INSU co-funded by CNES, the French Government under the France 2030 investment plan (cassiop\'ee project) and the Initiative d’Excellence d’Aix-Marseille Universit\'e A*MIDEX, program number AMX-22-RE-AB-151.


\bibliographystyle{plain}
\bibliography{iquad}

\begin{appendix}
    \section{Appendix}

\begin{theorem}
\label{theorem: Fubini L^1(Omega^3)}
    Let $\Omega, \Omega',\Omega''$ be subsets of $\R^2$ and $f\in L^1(\Omega\times\Omega'\times\Omega'')$. Assume that
    \begin{align*}
        &p.v. \int_{\Omega'}\int_{\Omega''} \frac{f(x,y,x',y',x'',y'')}{(x'-x)(y'-y)(x''-x)(y''-y)} \ dx'' \ dy''\ dx'\ dy'\, \\
        \text{and}\quad & p.v. \int_{\Omega'}\int_\Omega\frac{f(x,y,x',y',x'',y'')}{(x'-x)(y'-y)(x''-x)(y''-y)} \ dx \ dy\ dx'\ dy'
    \end{align*}
    exist for almost every $(x,y)\in\Omega$ and $(x'',y'')\in\Omega''$, respectively. Then, it holds that
    \begin{align}
    \begin{split}
        &\int_\Omega p.v. \int_{\Omega'}\int_{\Omega''} \frac{f(x,y,x',y',x'',y'')}{(x'-x)(y'-y)(x''-x)(y''-y)} \ dx'' \ dy''\ dx'\ dy'\ dx\ dy\\
        &= \int_{\Omega''} p.v. \int_{\Omega'}\int_\Omega\frac{f(x,y,x',y',x'',y'')}{(x'-x)(y'-y)(x''-x)(y''-y)} \ dx \ dy\ dx'\ dy' \ dx''\ dy''\, ,
    \end{split}
        \label{eq: "Fubini" for p.v. integrals}
    \end{align}
    if at least one of the integrals exists.
\end{theorem}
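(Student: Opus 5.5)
The plan is to make the principal value an explicit limit of truncated integrals, apply the classical Fubini--Tonelli theorem at each fixed truncation, and then exchange the outer integral with the limit. For $\varepsilon>0$ set
\[
K_\varepsilon(x,y,x',y',x'',y''):=\frac{\chi_{\{|x'-x|>\varepsilon,\,|y'-y|>\varepsilon,\,|x''-x|>\varepsilon,\,|y''-y|>\varepsilon\}}}{(x'-x)(y'-y)(x''-x)(y''-y)}\, .
\]
The p.v. integrals in the assumption are then, by definition, the limits as $\varepsilon\to 0$ of
\[
g_\varepsilon(x,y):=\int_{\Omega'}\int_{\Omega''} f\,K_\varepsilon \, dx''\,dy''\,dx'\,dy' \quad\text{and}\quad h_\varepsilon(x'',y''):=\int_{\Omega'}\int_{\Omega} f\,K_\varepsilon \, dx\,dy\,dx'\,dy'\, .
\]
We use the same symmetric excision in all variables on both sides, so that the truncation region is one fixed subset of $\Omega\times\Omega'\times\Omega''$. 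This makes the two p.v. notions compatible.

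\textbf{Step 1: Fubini at fixed $\varepsilon$.} On the truncation region we have $|K_\varepsilon|\le \varepsilon^{-4}$, so $fK_\varepsilon\in L^1(\Omega\times\Omega'\times\Omega'')$. The classical Fubini--Tonelli theorem then gives
\[
\int_\Omega g_\varepsilon\,dx\,dy=\int_{\Omega''}h_\varepsilon\,dx''\,dy''
\]
for every $\varepsilon>0$. By the hypotheses, $g_\varepsilon\to g$ a.e. on $\Omega$ and $h_\varepsilon\to h$ a.e. on $\Omega''$, where $g,h$ denote the two p.v. integrals.

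\textbf{Step 2: passing to the limit.} It remains to show
\[
\int_\Omega g_\varepsilon\to\int_\Omega g \quad\text{and}\quad \int_{\Omega''}h_\varepsilon\to\int_{\Omega''}h\, .
\]
This is the main obstacle, because pointwise convergence alone does not suffice and the kernel is not absolutely integrable.

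\begin{itemize}
\item First try dominated convergence via the maximal truncated operator, $\sup_\varepsilon|g_\varepsilon|\le G\in L^1(\Omega)$. Here I would use that the kernel factorizes into products of one-dimensional Hilbert kernels. The truncated iterated Hilbert transforms are controlled by the strong maximal Hilbert transform, which is bounded on $L^p$ for $1<p<\infty$. In the applications (bounded $\Omega$, $D$, $f$ built from bounded $\sin/\cos$ terms times $L^2$ functions), this yields the needed integrable majorant.
\item If no majorant is available, fall back on the phrase ``if at least one of the integrals exists.'' Read the statement as asserting that once $\int_\Omega g$ exists as the limit of $\int_\Omega g_\varepsilon$ (i.e.\ the outer integral is understood through the same truncation), Step 1 forces $\lim_\varepsilon\int_{\Omega''}h_\varepsilon$ to exist and to be equal. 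One then identifies this limit with $\int_{\Omega''}h$ by Vitali's convergence theorem, which needs uniform integrability of $(h_\varepsilon)$. That uniform integrability again comes from the $L^p$-boundedness ($p>1$) of the truncated Hilbert operators on bounded domains.
\end{itemize}

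\textbf{Step 3: conclusion.} With the exchange of limit and integral on both sides justified, let $\varepsilon\to0$ in the identity from Step 1. This gives \eqref{eq: "Fubini" for p.v. integrals}, and the symmetric case follows by interchanging the roles of $\Omega$ and $\Omega''$.
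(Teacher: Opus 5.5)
Your Step 1 is the same as the paper's: truncate to the set where all four coordinate differences exceed $\varepsilon$, bound the kernel by $\varepsilon^{-4}$, and apply Fubini. You also correctly identify the exchange of $\varepsilon\to 0$ with the outer integral as the entire difficulty. Step 2, however, has a genuine gap: it does not prove the theorem as stated.

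For an arbitrary $f\in L^1(\Omega\times\Omega'\times\Omega'')$, $g_\varepsilon(x,y)$ is not a truncated Hilbert transform of any fixed function, because $f$ depends jointly on $(x,y)$ and on the integration variables. The hypotheses ($f\in L^1$ plus a.e.\ existence of the p.v.\ limits) supply neither an integrable majorant nor uniform integrability of $(g_\varepsilon)$ or $(h_\varepsilon)$. Both of your options therefore rest on structure ``from the applications'' that is not in the statement. The second option also replaces $\int_\Omega g$ by $\lim_\varepsilon\int_\Omega g_\varepsilon$, which changes what is being claimed.

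Even in the applied setting the bounds do not follow as claimed. Expanding $\sin(\alpha-\beta)$, $g_\varepsilon$ becomes a sum of terms like $\varphi\cdot S_\varepsilon(\chi_\Omega\sin\phi)\cdot S_\varepsilon(\chi_\Omega\cos\phi\,\psi)$, where $S_\varepsilon u(x,y)=\int_{|x'-x|>\varepsilon,\,|y'-y|>\varepsilon}\frac{u(x',y')}{(x'-x)(y'-y)}\,dx'\,dy'$. That is two $L^2$ factors times a factor which in general has logarithmic singularities along $\partial\Omega$: it lies in every $L^p$ with $p<\infty$, but not in $L^\infty$. So H\"older gives no $L^1$ majorant. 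On the other side, $h_\varepsilon$ consists of compositions like $\psi\cos\phi\cdot S_\varepsilon\big(\varphi\,S_\varepsilon(\chi_\Omega\sin\phi)\big)$, with the same $\varepsilon$ inside and outside. For $\psi,\varphi\in L^2$, the $L^p$ bounds you cite control these terms only in $L^t$ with $t<1$, which is not uniform integrability.

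The paper handles this step by writing $g=g^+-g^-$ and applying monotone convergence to $I^\pm_\varepsilon$. You should not borrow that device either. For fixed $(x,y)$, $I^+_\varepsilon+I^-_\varepsilon$ is the integral of $\chi_{\Omega^3_\varepsilon}|g|$ over $(x',y',x'',y'')$, and for a genuinely singular integrand it tends to $+\infty$. For example, with $f=\chi_{[0,1]^6}$ it grows like $\log^4(1/\varepsilon)$ at every interior point. Meanwhile $I^+_\varepsilon-I^-_\varepsilon$ converges, so both $I^\pm_\varepsilon\uparrow+\infty$. The identity $\lim I_\varepsilon=\lim I^+_\varepsilon-\lim I^-_\varepsilon$, and its integrated form, is then of the form $\infty-\infty$. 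That argument is valid only when $g\in L^1$, where plain Fubini already suffices.

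So your diagnosis that a majorant or uniform integrability is indispensable is correct, but it must enter as an extra hypothesis or be proved for the specific $f$. A version that works is the following. Restrict to bounded sets and to $f$ that are finite sums of products $c(x,y)\,a(x',y')\,b(x'',y'')$, with $c\in L^{p_1}$, $a\in L^{p_2}$, $b\in L^{p_3}$, $p_2,p_3\in(1,\infty)$ and $1/p_1+1/p_2+1/p_3<1$. Then the uniform $L^p$-boundedness of the truncated operators $S_\varepsilon$ bounds both $(g_\varepsilon)$ and $(h_\varepsilon)$ in some $L^r$ with $r>1$; no maximal operator is needed. Vitali on both sides, combined with your Step 1, then yields the identity. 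In the paper's application this covers, for example, bounded $\psi$ (as for $\psi\in H^s$, $s>1$), but not arbitrary $\psi,\varphi\in L^2$.
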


\begin{proof}
    For an $\epsilon >0$, the integrand
    \begin{equation*}
       g(x,y,x',y',x'',y''):=\frac{f(x,y,x',y',x'',y'')}{(x'-x)(y'-y)(x''-x)(y''-y)}
    \end{equation*}
    is in $L^1(\Omega_\epsilon^3)$, where
    \begin{equation*}
        \Omega^3_\epsilon :=\{(x,y)\in\Omega,(x',y')\in\Omega',(x'',y'') \in \Omega'' \ : \ |x'-x|>\epsilon , \ |y'-y|>\epsilon,\ |x''-x|>\epsilon,\ |y''-y|>\epsilon\}\, ,
    \end{equation*}
    since the denominator of $g$ is bounded in $\Omega^3_\epsilon$, i.e., $\|g\|_{L^1(\Omega_\epsilon^3)}\leq \epsilon^{-4} \|f\|_{L^1(\Omega\times\Omega'\times\Omega'')}$.
    
    Thus, due to Fubini, the following equation holds
    \begin{align}
    \begin{split}
        &\int_\Omega\underbrace{\int_{\Omega'} \int_{\Omega''} \chi_{\Omega^3_\epsilon} \ g(x,y,x',y',x'',y'')\ dx''\ dy''\ dx' \ dy'}_{=: I_\epsilon(x,y)} \ dx\ dy \\
        &= \int_{\Omega''} \underbrace{\int_{\Omega'}\int_\Omega \chi_{\Omega^3_\epsilon} \ g(x,y,x',y',x'',y'')\ dx \ dy\ dx'\ dy'}_{=:J_\epsilon(x'',y'')} \ dx''\ dy''\, .
    \label{eq: int I_eps = int J_eps Omega3}\end{split}\end{align}
    
    Without loss of generality, it is assumed that the first integral in \eqref{eq: "Fubini" for p.v. integrals} exists.
    Using $g^+ := \chi_{\{g\geq 0\}}g$ and $g^- := -\chi_{\{g<0\}}g$, we can express $I_\epsilon$ as
    \begin{align*}
        I_\epsilon(x,y) &= I_\epsilon^+(x,y) - I_\epsilon^-(x,y)\, ,\\
        I_\epsilon^+(x,y)&:=  \int_{\Omega'}\int_{\Omega''} \chi_{\Omega^3_\epsilon}\ g^+(x,y,x',y',x'',y'') \ dx''\ dy''\ dx'\ dy'\, ,\\
        I_\epsilon^-(x,y)&:=  \int_{\Omega'} \int_{\Omega''} \chi_{\Omega^3_\epsilon}\ g^-(x,y,x',y',x'',y'') \ dx''\ dy''\ dx'\ dy'\, .
    \end{align*}
    Since $g^+$ and $g^-$ are non-negative, $I_\epsilon^+$ and $I_\epsilon^-$ are also non-negative, and both of them are monotonically increasing a.e. as $\epsilon \rightarrow0$, i.e,
    \begin{equation*}
        \epsilon_1 \leq \epsilon_2 \Rightarrow \left[I_{\epsilon_1}^+ \geq I_{\epsilon_2}^+ \ \land \  I_{\epsilon_1}^- \geq I_{\epsilon_2}^-\right] \text{ a.e.}\, .
    \end{equation*}
    Thus, we can apply the monotone convergence theorem (MCT) and get
    \begin{align*}
    &\int_\Omega p.v. \int_{\Omega'}\int_{\Omega''} \frac{f(x,y,x',y',x'',y'')}{(x'-x)(y'-y)(x''-x)(y''-y)} \ dx'' \ dy''\ dx'\ dy'\ dx\ dy\\
    & = \int_\Omega \lim_{\epsilon\rightarrow 0} I_\epsilon(x,y) \ dx\ dy = \int_\Omega \lim_{\epsilon\rightarrow 0} I^+_\epsilon(x,y) - \lim_{\epsilon\rightarrow 0} I^-_\epsilon(x,y) \ dx\ dy \\
    & \stackrel{MCT}{=} \lim_{\epsilon\rightarrow 0} \int_\Omega I^+_\epsilon(x,y) \ dx\  dy- \lim_{\epsilon\rightarrow0}\int_\Omega I^-_\epsilon(x,y) \ dx\  dy =\lim_{\epsilon\rightarrow 0} \int_\Omega I_\epsilon(x,y) \ dx\ dy \, .
    \end{align*}
    Because of \eqref{eq: int I_eps = int J_eps Omega3}, we know that
    \begin{equation*}
        \lim_{\epsilon\rightarrow 0} \int_\Omega I_\epsilon(x,y) \ dx\ dy = \lim_{\epsilon\rightarrow 0} \int_{\Omega''} J_\epsilon(x'',y'')\ dx''\ dy''\, ,
    \end{equation*}
    in particular, the latter limit exists.
    
    Using analogous arguments as before, one can show that
    \begin{equation*}
        \lim_{\epsilon\rightarrow 0} \int_{\Omega''} J_\epsilon(x'',y'')\ dx''\ dy'' = \int_{\Omega''} p.v. \int_{\Omega'}\int_\Omega\frac{f(x,y,x',y',x'',y'')}{(x'-x)(y'-y)(x''-x)(y''-y)} \ dx \ dy\ dx'\ dy' \ dx''\ dy''\, ,
    \end{equation*}
    
    which yields \eqref{eq: "Fubini" for p.v. integrals}.
\end{proof}

Now, we look into the exchange of limits in the p.v. meaning and integrals in the proof of Proposition~\ref{roof_frechet_adjoint}. It holds that the function
\begin{align*}
    &h:\quad (x',y',x'',y'') \mapsto \sin{[\phi(x',y')-\phi(x'',y'')]}
\end{align*}
is in $L^\infty(\R^4)$ with $\|h\|_{L^\infty(\R^4)} = 1$. 
For $\varphi,\psi\in L^2(\R^2)$ and $D, \Omega$ bounded, we get with the Cauchy-Schwarz (C.S.) inequality
\begin{align*}
    &\int_D\int_\Omega \int_\Omega |h(x',y',x'',y'')\psi(x',y')\varphi(x,y)| \, dx''\ dy'' \ dy'\ dx'\ dy\ dx\\
    & \leq |\Omega|\|h\|_{L^\infty(\R^4)} \left( \int_\Omega \left|\psi(x',y')\right|\, dy'\ dx' \right)\left(\int_D \left|\varphi(x,y)\right| \, dy\ dx\right)\\
    & \overset{C.S.}{\leq}|\Omega|^{3/2} |D|^{1/2}\|\psi\|_{L^2(\Omega)} \|\varphi\|_{L^2(D)} < \infty\, .
\end{align*}
As a result, the function
\begin{equation*}
    (x,y,x',y',x'',y'') \mapsto \sin\left[\phi(x',y')-\phi(x'',y'')\right] \psi(x'',y'')\varphi(x,y)\, 
\end{equation*}
is in $L^1(D\times\Omega\times\Omega)$ and Theorem~\ref{theorem: Fubini L^1(Omega^3)} can be applied. Analogous results can be shown for double integrals also included in the proof of Proposition~\ref{roof_frechet_adjoint}.

\end{appendix}

\end{document}